\newtheorem{theorem}{Theorem}
\newtheorem{assumption}{Assumption}
\newtheorem{proposition}{Proposition}
\begin{document}

\markboth{Yukihiro Tsuzuki}{Boundary conditions at infinity for Black--Scholes equations}
\title{Boundary conditions at infinity for Black--Scholes equations\thanks{We would like to thank Professor Yoshiki Otobe for the helpful discussions.
This study was supported by JSPS KAKENHI, Grant Number 23K01466.
}}
\author{Yukihiro Tsuzuki\thanks{Faculty of Economics and Law, Shinshu University, 3-1-1 Asahi, Matsumoto, Nagano 390-8621, Japan; yukihirotsuzuki@shinshu-u.ac.jp}}

\maketitle

\abstract{
We propose a numerical procedure for computing the prices of European options, in which the underlying asset price is a Markovian strict local martingale.
If the underlying process is a strict local martingale and the payoff is of linear growth, multiple solutions exist for the corresponding Black--Scholes equations.
When numerical schemes such as finite difference methods are applied,
a boundary condition at infinity must be specified, which determines a solution among the candidates.
The minimal solution, which is considered as the derivative price, is obtained by our boundary condition.
The stability of our procedure is supported by the fact that our numerical solution satisfies a discrete maximum principle. In addition,
its accuracy is demonstrated through numerical experiments in comparison with the methods proposed in the literature.
}

{\bf Keywords: Black--Scholes equations; local martingale; financial bubble; non-uniqueness of Cauchy problem; derivative price}

{\bf AMS Subject Classification:} 60J60, 60J65

\clearpage
\section{Introduction}
\label{intro}
In this study, we propose a numerical procedure for computing the prices of European options in the presence of a bubble.
A bubble is defined as a discounted price process that is modeled by a strict local martingale (i.e., a local martingale but not a true martingale) under a risk-neutral measure.
Bubbles have been studied extensively in mathematical finance (for example, in \citet{CoxHobson2005},
\citet{HestonLoewensteinGregory2006},
\citet{JarrowProtterShimbo2007},
\citet{JarrowProtterShimbo2010},
\citet{PalProtter2010},
\citet{CarrFisherRuf2014},
and the references therein).
Many standard results in option pricing theory may fail in such markets.
For example, put-call parity does not hold,
the European call option price is neither convex in the underlying asset price nor increasing in maturity,
the price of an American call exceeds that of a European call,
and solutions to the Black--Scholes equations are non-unique.

We focus on the issue of non-uniqueness of the Black--Scholes equations.
In general, derivative prices are solutions to the corresponding partial differential equations or Cauchy problems, which are referred to as Black--Scholes equations in the literature.
The existence and uniqueness of these solutions have been studied in detail by \citet{HeathSchweizer2000} and \citet{JansonTysk2006}.
However, multiple solutions may exist in the presence of a bubble.
More precisely, \citet{Tysk2009} show that the uniqueness holds if the payoff function is of strictly sublinear growth
and does not hold if it is of linear growth, such as those of call options and forward contracts.
The issue of non-uniqueness has been addressed from a theoretical perspective.
The derivative price is characterized as the smallest non-negative supersolution according to the fundamental theorem of asset pricing (\citet{Delbaen1994}),
and the smallest solution is characterized as a unique solution to an alternative Cauchy problem (Theorem 6.2 of \citet{Cetin2018}).
The smallest solution is provided by the expectation of the payoff, which is referred to as a {\it stochastic solution}, whereas a function that satisfies the Black--Scholes equation is a {\it classical solution}.
However, despite their practical importance, numerical procedures have not been extensively developed.
Specifying the spatial boundary conditions is one difficulty when applying numerical methods that are set up on a finite grid, such as finite difference methods.
A standard specification, i.e., specifying the boundary conditions as the payoff function, does not lead to the desired results in the presence of a bubble if the payoff function is of linear growth.
Therefore, the boundary conditions must be specified carefully.

Several methods have been proposed in the literature.
For instance, \citet{EkstromLotstedtSydowTysk2011} propose the Neumann boundary conditions; that is, the first-order derivative is $0$,
whereas \citet{SongYang2015} propose setting $0$ as the boundary condition and revising the terminal conditions such that the terminal-boundary datum is continuous.
The prices from these procedures converge to the true price.
However, these boundary conditions do not sufficiently consider the fact that 
the solution, which is of sublinear growth (Corollary 6.1 of \citet{Cetin2018}), approaches a payoff function, which is of linear growth, as the time-to-maturity tends to zero.
The application of finite schemes to the alternative Cauchy problem proposed by \citet{Cetin2018} is worth investigating. However, to the best of our knowledge, such an investigation has not been reported in the literature.
We confirm that a straightforward application leads to almost the same method as that in \citet{SongYang2015} without a payoff revision.
The change of variable that transforms infinity into zero makes this method different from the approach of \citet{SongYang2015}
and it is expected to reduce spatial truncation errors.
However, a singular coefficient can appear in the alternative partial differential equations.
This can cause the problem to be {\it convection dominated}, which is common
in the fields of computational fluid dynamics and pricing Asian options (see, for example, \citet{Zvan1997} and Chapter 10 of \citet{Duffy2006}).
In this case, numerical solutions show spurious oscillations or a very small mesh size must be selected.
In contrast to the methods above, \citet{tsuzuki2023pitmans} proposes non-constant boundary conditions using Pitman's theorem for the three-dimensional Bessel process.
This procedure is numerically investigated in this study,
from which further improvements are revealed to be necessary.

This study proposes boundary conditions for computing the prices of forward contracts and the other European options, respectively,
in which the Black--Scholes equation has a unique solution among an appropriate class of considered functions.
These conditions are based on the formula that expresses the derivative price at infinity as the integral of the payoff function
with respect to the measure that is obtained by the forward price function, which is concave in the presence of a bubble.
The case of the forward contract is typical and important.
In this case, whereas the formula holds with inequality for a classical solution; that is,
its value at infinity is larger than the integral that is computed using the solution,
the minimum of the former and maximum of the latter are attained by the stochastic solution,
and they are equal.
In the other cases, the boundary conditions are obtained as boundary values
that are computed by the stochastic solution of the forward contract.
We express the underlying process with a transient diffusion and use its time-reversal to derive the formula.
A condition on the volatility is required to use this technique.
This formula is also obtained as the limit of the boundary conditions of \citet{tsuzuki2023pitmans}.

Our numerical procedure has the following three advantages.
First, payoff functions are not restricted to those of at most linear growth, which is assumed for the methods proposed in the literature.
The second advantage is stability.
Our finite scheme satisfies a discrete maximum principle,
which ensures that the numerical solutions do not admit local maxima/minima.
Therefore, the solutions for non-decreasing payoff functions are free from oscillations.
In addition, our numerical forward prices remain increasing and concave with respect to the underlying prices and decreasing with respect to the time-to-maturity, which are expected to be the case.
Finally, as demonstrated theoretically and numerically, our procedure is more accurate than the others.
The discrete maximum principle also shows
that the numerical solutions of \citet{EkstromLotstedtSydowTysk2011} are less than ours.
Our method outperforms that of \citet{tsuzuki2023pitmans} by construction
because, as mentioned previously, our solution is obtained as the limit of the boundary conditions of \citet{tsuzuki2023pitmans}.
\citet{SongYang2015} produces much smaller results than those of \citet{tsuzuki2023pitmans} owing to smaller boundary values.
The numerical tests in this study confirm that our method is more accurate than the others except in one case, in which the results of \citet{Cetin2018} slightly outperform ours.

The following contributions are also made in this study:
In contrast to the case of a true martingale,
the forward price in the presence of a bubble is a concave function with respect to the underlying price,
which inspires us to investigate distributions that are implied in the forward prices.
We provide three interpretations of the forward prices in terms of probabilities, aided by the results of \citet{profeta2010}.
Whether the forward value at infinity is finite is an interesting question, and
to the best of our knowledge, whether it is always finite remains an open question.
\citet{Tysk2009} provide volatility growth conditions for the forward price at infinity to be finite.
We weaken this condition and find that the finiteness holds even in cases in which the prices are considered to increase slowly, but not necessarily to be bounded.

The problem of the non-uniqueness of the Cauchy problem is closely related to the problem of a relevant diffusion being a strict local martingale.
\citet{BayraktarXing2010} provide a necessary and sufficient condition for the uniqueness in terms of a diffusion coefficient.
\citet{CetinLarsen2023} study uniqueness in Cauchy problems for real-valued one-dimensional diffusions.
Cauchy problems with multiple solutions have emerged in mathematical finance and diffusion theory.
\citet{Fernholzkaratzas2010} consider the highest return on investment in an equity market model,
whereas \citet{KaratzasRuf2013} compute the distribution function of the time-to-explosion.
These solutions are characterized by the smallest non-negative solutions of partial differential equations (or inequalities) with multiple possible solutions.
The computation of the distribution function of the time-to-explosion can be reduced to the alternative problem of \citet{Cetin2018}, to which our numerical method can be applied.

A non-negative strict local martingale defines a measure that is absolutely continuous but not equivalent to the original measure via an $h$-transform.
Conversely, a strict local martingale is obtained as the Radon--Nikodym process from two probability measures that are not equivalent.
This correspondence is studied by \citet{PalProtter2010} for continuous local martingales
and is extended to the case of c{\'a}dl{\'a}g local martingales by \citet{KardarasKreherNikeghbali2015}.
The current study is carried out on the two probability measures that are related via an $h$-transform, as in \citet{PalProtter2010}.
We construct a regular transient diffusion on the dominated measure such that the local martingale is expressed as a transformation of this diffusion,
and use the duality relationship of the two measures to derive our boundary condition.

In the following section, we formulate the problem.
In Section \ref{sec:main},
we provide the main theorem of this study,
in which the derivative price at infinity is represented with its payoff and the forward prices.
In addition, we derive distributions that are implied in forward prices,
provide a condition for the boundedness of the forward prices,
and discuss the uniqueness.
Section \ref{sec:example} presents models that admit analytical expressions: the constant elasticity of variance and quadratic normal volatility models.
Finally, in Section \ref{sec:numerical},
we provide descriptions of our numerical schemes as well as those proposed in the literature,
investigate their stability,
and present numerical tests.

\section{Problem}
\label{sec:problem}
Let $\sigma : (0,\infty) \longrightarrow (0,\infty)$ be a locally H\"{o}lder continuous function with exponent $1/2$
and $Y$ be the solution to the stochastic differential equation
\begin{eqnarray}
Y_{t} = Y_{0} + \int_{0}^{t} \sigma(Y_{u}) d\beta_{u}, \label{eq:underlying}
\end{eqnarray}
where $Y_{0}>0$ and $\beta$ is a Brownian motion.
In addition, $y=0$ is assumed as an absorbing boundary.
That is, if $Y_{t}=0$ at some $t$, $Y$ remains at $0$ at all times after $t$.
The condition on $\sigma$ ensures a unique strong solution that is absorbed at $0$.

We interpret $Y$ as the underlying price process,
and assume that it is defined under a risk-neutral measure
and that the risk-free interest rate is zero.
In this market model, we consider the value of a contingent claim that pays $h(Y_{T})$ at time $T$ for a non-negative continuous function $h$.
As the underlying price process $Y$ is a local martingale under a risk-neutral measure,
the arbitrage pricing theory suggests that the value of the contingent claim with time-to-maturity $\tau \ge 0$ and the underlying price $y$ is 
$v^{h}(\tau,y) := E_{y}[h(Y_{\tau})]$, where $E_{y}[\cdot]$ is the expectation operator that is conditioned on $Y_{0}=y>0$,
and if $v^{h}$ is sufficiently differentiable, it satisfies the partial differential equation
\begin{eqnarray}
\label{eq:bse}
v_{\tau} = \frac{1}{2} \sigma^{2} v_{yy}
\end{eqnarray}
for $(\tau,y) \in (0,\infty) \times (0,\infty)$,
with the initial condition
\begin{eqnarray}
\label{eq:initial}
v(0,y) = h(y).
\end{eqnarray}
The function $v^{h}$ is referred to as a stochastic solution, whereas a function that satisfies (\ref{eq:bse}) is a classical solution.

A bubble is a discounted price process that is a non-negative strict local martingale, and hence, a supermartingale, under a risk-neutral measure.
In the presence of a bubble, the Black--Scholes equation (\ref{eq:bse}) with (\ref{eq:initial}) admits multiple solutions for $h$ of linear growth,
among which the stochastic solution $v^{h}$ is characterized as the smallest solution.
A typical and an important example is the case of the forward contract; that is, $h(y)=y$,
where we express the stochastic solution using $v^{*}$ instead of $v^{h}$ for special emphasis.
The stochastic solutions $v^{*}$ and $v(\tau,y)=y$
are two distinct solutions owing to the supermartingale property of $Y$:
\begin{eqnarray}
v^{*}(\tau,y) := E_{y}[Y_{\tau}] < y = v(\tau,y).
\end{eqnarray}
The stochastic solution $v^{*}(\tau,y)$ has very different properties from the solution $v(\tau,y)=y$; 
that is, $v^{*}(\tau,y)$ is decreasing in $\tau$ and increasing but concave in $y$.

When applying numerical methods that are set up on a finite grid, such as $(0,T) \times (0,n)$,
the Black--Scholes equation is reduced to a partial differential equation (\ref{eq:bse}) on $(0,T) \times (0,n)$ with (\ref{eq:initial}).
In this case, appropriate additional spatial boundary conditions at $y=n$, such that the equation in the finite domain $(0,T) \times (0,n)$ has a unique solution, are required.
The problem is reduced to specifying the boundary condition.
Several boundary conditions for the case of $h$ of linear growth have been proposed to date.
We provide a brief literature review in Section \ref{sec:procedures}.

To address the issue of non-uniqueness with the Black--Scholes equations,
\citet{Cetin2018} establishes a characterization of the stochastic solution in terms of the unique solution of an alternative Cauchy problem
\begin{eqnarray}
\label{eq:cetin}
\left\{ \begin{array}{l}
w_{\tau}(\tau,y) = \frac{1}{2}\sigma^{2}(y)w_{yy}+\frac{\sigma^{2}(y)}{y} w_{y}, \\
w(0,y) = \frac{h(y)}{y},
\end{array} \right.
\end{eqnarray}
where $\lim_{y\rightarrow 0} \sup_{u \le \tau} yw(u,y)<\infty$,
and $\lim_{n \rightarrow \infty} w(\tau_{n},y_{n})=0$ for any $y_{n}\nearrow \infty$ and $\tau_{n} \rightarrow \tau > 0$.
The stochastic solution to the original problem is $v^{h}(\tau,y)=yw(\tau,y)$.
Refer to Theorem 6.2 of \citet{Cetin2018} for further details.
The finite difference scheme for (\ref{eq:cetin}) on $(0,T) \times (0,n)$ with $w(\tau,n)=0$
is equivalent to that for the original problem with $v(\tau,n)=0$,
which is the same as the method of \citet{SongYang2015}.
Instead, we examine the finite scheme for (\ref{eq:cetin}) with a change of variable in Section \ref{sec:numerical}.

We mention a related problem.
\citet{KaratzasRuf2013} consider a diffusion that takes values in $(l,r)$ defined by the unique solution to
\begin{eqnarray}
X_{t} = x + \int_{0}^{t} s(X_{u}) (d\beta_{u}+b(X_{u})du),\; x \in (l,r),
\end{eqnarray}
where $l,r$ are constant, and $s$ and $b$ are Borel functions with appropriate conditions.
The complementary distribution function of the time-to-explosion of $X$;
that is, $S = \lim_{n \rightarrow \infty} \inf \{t : X_{t} \notin (l+1/n,r-1/n) \}$, is the unique solution to
\begin{eqnarray}
\left\{ \begin{array}{l}
u_{\tau} = \frac{1}{2}s^{2}u_{xx} + bs u_{x}, \\
u(0,x) = 1,\\
u(\tau,0)=0.
\end{array} \right.
\end{eqnarray}
The solution $u(\tau,x)$ to this problem is provided by the solution $w(\tau,y)$ to (\ref{eq:cetin}) with $h(y)=y$
by an appropriate change of variable $y=\varphi(x)$,
where $\varphi$ is determined by $\varphi^{\prime\prime}/\varphi^{\prime}-2\varphi^{\prime}/\varphi=-2b/s$.
In this case, if $Y$, which is determined by (\ref{eq:underlying}) with $\sigma(y) = s(\varphi^{-1}(y))$, is a strict local martingale, 
there exists a solution $u$, or equivalently $w$, that is not constant.
A difficulty is that the problem may be convection dominated, as noted in Section \ref{sec:cetinstability}.
In such a case, our numerical procedure is more promising.
It can be applied to the problem formally,
although $\sigma$ must satisfy a specific condition; that is, Assumption \ref{ass:ass_vol} below.

\section{Boundary conditions at infinity}
\label{sec:main}
In this section, we present the main theorem.
Hereafter in this study, we impose the following assumption on $\sigma$:
\begin{assumption}
\label{ass:ass_vol}
The volatility function $\sigma$ is of class $C^{1}((0,\infty))$
and satisfies 
\begin{itemize}
\item[(a)] $\int_{0}^{1} \frac{d\eta}{\sigma(\eta)} = \infty$ and $\int_{1}^{\infty} \frac{d\eta}{\sigma(\eta)} < \infty$;
\item[(b)] $Y$ is strictly positive: $\lim_{y \downarrow 0} \int_{y}^{1}\frac{\eta-y}{\sigma(\eta)^{2}} d\eta = \infty$; and 
\item[(c)] $Y$ is a strict local martingale: $\int_{1}^{\infty} \frac{\eta}{\sigma(\eta)^{2}} d\eta < \infty$.
\end{itemize}
\end{assumption}

We use (a) to express the law of $Y$ using a transient diffusion in Section \ref{sec:duality}.
Condition (b) ensures that $Y$ is strictly positive according to Feller's test (see, for example, Propositions 5.22 and 5.29 in Chapter 5 of \citet{karatzas}),
and (c) is necessary and sufficient for $Y$ to be a strict local martingale according to \citet{Kotani2006}.

The main theorem of this study is as follows:
\begin{theorem}
\label{thm:main}
Suppose that $\sigma$ satisfies Assumption \ref{ass:ass_vol}.
Then, for a non-negative, continuous function $h$,
we obtain
\begin{eqnarray}
v^{h}(\tau,\infty)
= -2 \int_{0}^{\infty} v_{\tau}^{*}(\tau,y) \frac{h(y)}{\sigma(y)^{2}} dy
= -\int_{0}^{\infty} h(y)v_{yy}^{*}(\tau,y)dy, \label{eq:boundary}
\end{eqnarray}
for all $\tau > 0$.
\end{theorem}

We provide several remarks regarding the theorem.

The stochastic solution $v^{*}$ is a classical solution according to Theorem 3.2 of \citet{Tysk2009} because the initial condition $h$ is of at most linear growth.
See Theorem 4.2 of \citet{ruf2012} for the differentiability of $v^{h}$ if $h$ is of more than linear growth.
However, we do not require $v^{h}$ to be sufficiently differentiable.

The application of the integration-by-parts formula to a smooth function $v(\tau,\cdot)$ yields
\begin{eqnarray}
v(\tau,y)
&=&
v(\tau,0)-\lim_{\eta\rightarrow 0} \eta v_{y}(\tau,\eta)+ yv_{y}(\tau,y) - \int_{0}^{y} \eta v_{yy}(\tau,\eta) d\eta \label{eq:ibp1}\\
&=&
v(\tau,0)-\lim_{\eta \rightarrow 0} \eta v_{y}(\tau,\eta)+yv_{y}(\tau,\infty)
-\int_{0}^{\infty} \left(y \wedge \eta \right)v_{yy}(\tau,\eta) d\eta. \label{eq:ibp2}
\end{eqnarray}
The boundary condition (\ref{eq:boundary}) for $h(y)=y$ can be obtained by letting $y \longrightarrow \infty$
under the conditions
$v(\tau,0) = \lim_{\eta \rightarrow 0} \eta v_{y}(\tau,\eta)=0$, which can be supposed,
and 
$\lim_{y \rightarrow \infty}yv_{y}(\tau,y)=0$ or $v_{y}(\tau,\infty)=0$.
However, we do not use this argument to prove Theorem \ref{thm:main}.
The above conditions at infinity are consequences rather than assumptions.
Similarly, for an initial condition $h$ that is not identical to $y$,
we obtain an alternative expression for $v^{h}(\tau,\infty)$ under appropriate assumptions:
\begin{eqnarray}
v^{h}(\tau,\infty)
= -\lim_{y \rightarrow \infty} \int_{0}^{y} \eta v_{yy}^{h}(\tau,\eta) d\eta. \label{eq:boundary2}
\end{eqnarray}
We note that the integrand of the integral of the right-hand side is not necessarily integrable in $[0,\infty)$.
Although we could use this alternative expression as a boundary condition,
we recommend the use of (\ref{eq:boundary}), especially when $h$ exhibits more than linear growth.
See Section \ref{sec:procedures} for further details.

The proof of Theorem \ref{thm:main} is presented in Section \ref{sec:duality}.
Interpretations of the forward price $v^{*}(\tau,\cdot)$ in terms of probabilities are provided in Section \ref{sec:as_prob},
a sufficient condition for $v^{*}(\tau,\infty) < \infty$ for $\tau > 0$
is derived in Section \ref{sec:finiteness},
and conditions for the uniqueness are provided in Section \ref{sec:uniqueness}.

\subsection{Proof of Theorem \ref{thm:main}}
\label{sec:duality}
In this section, we prove the boundary condition (\ref{eq:boundary})
using time-reversal for a regular transient diffusion $X$ on $(0,\infty)$ such that
the law of $Y$ is the same as $f(X)$,
where $f$ is defined by
\begin{eqnarray}
f^{-1} = \int_{\cdot}^{\infty} \frac{d\eta}{\sigma(\eta)}. \label{eq:def_f}
\end{eqnarray}
The construction of $f(X)$ is performed in the same manner as Proposition 6 of \citet{PalProtter2010},
which states that any strict local martingale that is strictly positive can be obtained as the reciprocal of a martingale under an $h$-transform.

We construct the transient diffusion $X$ on the canonical sample space $\Omega:=C([0,\infty))$,
where $C([0,\infty))$ denotes the space of continuous paths $\omega:[0,\infty) \longrightarrow [0,\infty]$ and $0,+\infty$ are assumed to be absorbing points.
In this space, $X$ denotes the coordinate process; that is,
$X_{t}(\omega):=\omega(t)$ for $\omega\in \Omega$.
We introduce the sigma algebras $\mathcal{F}_{t}^{0} := \bigcap_{\varepsilon > 0} \sigma(X_{u}; u \le t+\varepsilon)$.
We consider probabilities on $(\Omega,\mathcal{F}_{\infty}^{0})$
such that $X$ is a $[0,\infty]$-valued regular diffusion.
For such a probability measure $P$, the expectation operator is also denoted by $P$.

We can construct two probability measures $P_{x}^{f}$ and $P_{x}^{1/f}$ for $x>0$
on the canonical space such that $f(X)$ has the same law as $Y$ under $P_{x}^{f}$,
and $f(X_{0})/f(X)$ is a non-negative martingale with respect to $P_{x}^{1/f}$,
which is the Radon--Nikodym derivative of $dP_{x}^{f}/dP_{x}^{1/f}$.
More precisely, first,
let $P_{x}^{f}$ be the law of the solution to the stochastic differential equation
\begin{eqnarray}
X_{t} = \beta_{t} - \int_{0}^{t} \frac{1}{2}T_{f}(X_{u}) du, \label{eq:latant}
\end{eqnarray}
where
$\beta$ is a Brownian motion starting at $x$
and $T_{f}:=f^{\prime\prime}/f^{\prime}$.
Second,
let $P_{x}^{1/f}$ be the law of the solution to the stochastic differential equation
\begin{eqnarray}
X_{t} = \beta_{t}^{1/f} - \int_{0}^{t} \frac{1}{2}T_{1/f}(X_{u}) du,\label{eq:reciprocal_latant}
\end{eqnarray}
where $\beta^{1/f}$ is a Brownian motion starting at $x$.
Then, according to Ito's formula,
the law of $f(X)$ with respect to $P_{x}^{f}$ is the same as that of $Y$,
and $1/f(X)$ is a local martingale with respect to $P_{x}^{1/f}$.
In particular, (b) of Assumption \ref{ass:ass_vol} yields $P_{x}^{f}[\inf_{t \ge 0} X_{t} > 0, \lim_{t \rightarrow \infty} X_{t} = \infty ] = 1$;
that is, $X$ is a strictly positive transient diffusion.
Furthermore, we can relate $P_{x}^{f}$ and $P_{x}^{1/f}$ such that 
\begin{eqnarray}
P_{x}^{f} = \frac{1/f(X_{T_{n}})}{1/f(X_{0})}P_{x}^{1/f} \label{eq:equivalence}
\end{eqnarray}
holds on $\mathcal{F}_{T_{n}}^{0}$ for all $n > 0$,
where $T_{n} := \inf\{t>0: f(X_{t}) \notin (1/n,n) \}$.
By letting $n \longrightarrow \infty$ in (\ref{eq:equivalence}),
we obtain the absolutely continuous relationship
\begin{eqnarray}
P_{x}^{f} \left[\frac{f(X_{\tau})}{f(X_{0})}: \Gamma \right]
= P_{x}^{1/f}[\Gamma : \tau < T_{0}]
\end{eqnarray}
for any $\Gamma \in \mathcal{F}_{\tau}^{0}$ and $\tau > 0$,
where $T_{0} := \inf\{t>0:X_{t}=0 \}$.
In particular,
\begin{eqnarray}
P_{x}^{1/f} \left[\frac{f(X_{0})}{f(X_{\tau})}\right]
=
P_{x}^{1/f} \left[\frac{f(X_{0})}{f(X_{\tau})}:\tau < T_{0}\right]
=1
\end{eqnarray}
shows that $1/f(X)$ is a true martingale with respect to $P_{x}^{1/f}$.

The regular diffusion $(X,P_{x}^{f})_{x>0}$ on $(0,\infty)$ extends to a continuous Feller process on $[0,\infty)$ such that $P_{0}^{f}[X_{t}>0]=1$ for $t > 0$ (Theorem 33.13 of \citet{Kallenberg2021}),
because
the presence of a bubble implies that
$0$ is an entrance boundary for $(X,P_{x}^{f})_{x>0}$:
\begin{eqnarray}
\frac{1}{2} \int_{0}^{f^{-1}(1)} f(\xi)m_{f}(d\xi)
=\int_{1}^{\infty} \frac{\eta}{\sigma(\eta)^{2}} d\eta
< \infty,
\end{eqnarray}
where $m_{f}(dx)=2\frac{dx}{|f^{\prime}(x)|}$ is the speed measure of $X$ under $P_{x}^{f}$. 
We denote the law of $X$ starting from $0$ as $P_{0}^{f}$.

The boundary condition (\ref{eq:boundary}) follows from the following proposition, together with the expressions $v^{h}(\tau,\infty)=P_{0}^{f}[h(f(X_{\tau}))]$
and $v^{*}(\tau,y)=yP_{f^{-1}(y)}^{1/f}[\tau < T_{0}]$.
\begin{proposition}
\label{prop:doubleintegral}
For $\rho,\lambda \in C_{K}((0,\infty))$, we obtain
\begin{eqnarray}
\int_{0}^{\infty} \rho(t)P_{0}^{f}[\lambda(X_{t})h(f(X_{t}))] dt
= \int_{0}^{\infty} \lambda(x)P_{x}^{1/f} [\rho(T_{0})] \frac{h(f(x))}{f(x)} m_{1/f}(dx).
\end{eqnarray}
\end{proposition}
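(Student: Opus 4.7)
The plan is to derive Proposition \ref{prop:doubleintegral} by passing to the limit $j \downarrow 0$ in the identity (\ref{eq:key}) of Lemma \ref{lem:key}. In this way the essential time-reversal step---Williams' identity for the three-dimensional Bessel process---is inherited from Lemma \ref{lem:key}, consistent with the remark that time reversal plays an essential role. The two sides of (\ref{eq:key}) are analyzed separately and then equated in the limit.

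For the LHS of (\ref{eq:key}), I would combine the weak convergence $P_{0}^{\tilde{f}_{j}} \Rightarrow P_{0}^{f}$ of Lemma \ref{lem:tightness} with the uniform convergence $f_{j} \to f$ on compact subsets of $(0,\infty)$ (immediate from continuity of $f$ and $f_{j}(x)=f(x+j)$). A triangle-inequality split separates the two $j$-dependencies:
\begin{equation*}
\bigl|P_{0}^{\tilde{f}_{j}}[\lambda(X_{\tau}) f_{j}(X_{\tau})] - P_{0}^{f}[\lambda(X_{\tau}) f(X_{\tau})]\bigr| \leq \|\lambda\|_{\infty}\sup_{\mathrm{supp}(\lambda)} |f_{j}-f| + \bigl|(P_{0}^{\tilde{f}_{j}} - P_{0}^{f})[\lambda(X_{\tau}) f(X_{\tau})]\bigr|,
\end{equation*}
and both terms vanish as $j \downarrow 0$---the first by uniform continuity of $f$ on $\mathrm{supp}(\lambda)$, the second by weak convergence tested against the bounded continuous functional $\omega \mapsto \lambda(\omega(\tau)) f(\omega(\tau))$ (which extends continuously to the entrance point $0$ because $\lambda$ vanishes there). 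With the uniform bound $\|k\|_{\infty}\|\lambda\|_{\infty}\sup_{\mathrm{supp}(\lambda)} f$ and compact support of $k$, dominated convergence then passes the limit under the $\tau$-integral, producing the LHS of Proposition \ref{prop:doubleintegral}.

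For the RHS of (\ref{eq:key}), the change of variable $y=x+j$ rewrites it as
\begin{equation*}
\int_{j}^{\infty}\left(1-\frac{f(y)}{f(j)}\right)\lambda(y-j)\,P_{y}^{1/f}[k(T_{j})]\,m_{1/f}(dy),
\end{equation*}
using that $X+j$ under $P_{y-j}^{1/f_{j}}$ has law $P_{y}^{1/f}$, that $T_{0}$ transports to $T_{j}$, and that $m_{1/f_{j}}(dx)$ pushes forward to $m_{1/f}(dy)$. Letting $j\downarrow 0$: Assumption (1) of Assumption \ref{ass:ass_on_f} yields $f(j)\to\infty$ and hence $f(y)/f(j)\to 0$; continuity of $\lambda$ gives $\lambda(y-j)\to\lambda(y)$; path continuity combined with $P_{y}^{1/f}[T_{0}<\infty]=1$ (the consequence of Assumption (4) recorded immediately after Assumption \ref{ass:ass_on_f}) yields $T_{j}\downarrow T_{0}$ $P_{y}^{1/f}$-a.s., so that $P_{y}^{1/f}[k(T_{j})]\to P_{y}^{1/f}[k(T_{0})]$ by bounded convergence. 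Since $\mathrm{supp}(\lambda(\cdot-j))$ stays in a fixed compact subset of $(0,\infty)$ for small $j$ and $m_{1/f}$ is finite on such compacts, dominated convergence delivers the limit $\int_{0}^{\infty}\lambda(y)P_{y}^{1/f}[k(T_{0})]\,m_{1/f}(dy)$, which is the RHS of Proposition \ref{prop:doubleintegral}.

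The hardest part will be the pointwise convergence on the LHS, because the integrand $f_{j}$ and the law $P_{0}^{\tilde{f}_{j}}$ depend on $j$ simultaneously; the triangle-inequality split separating these two dependencies is the key technical device, and it leans on the continuous extension of $\lambda\cdot f$ to the entrance point. A subsidiary subtlety is the a.s.\ convergence $T_{j}\downarrow T_{0}$ under $P_{y}^{1/f}$, which relies crucially on the accessibility of $0$ under $P^{1/f}$ granted by Assumption (4); without this, the limit identification would fail.
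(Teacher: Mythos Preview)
Your argument is correct—apart from a harmless slip in the direction of monotonicity (one has $T_{j}\uparrow T_{0}$, not $T_{j}\downarrow T_{0}$, since hitting a smaller level from above takes longer)—but it follows a different route from the paper's proof. Your strategy of passing to the limit in Lemma~\ref{lem:key} is essentially the computation the paper already carries out inside the proof of Proposition~\ref{prop:expression}: there, evaluating $\lim_{j\downarrow 0}\Psi_{j}(k,\lambda)$ in two ways yields precisely the identity of Proposition~\ref{prop:doubleintegral} as an intermediate step. The paper's separate proof of Proposition~\ref{prop:doubleintegral} is deliberately independent of the $\Theta_{j}$ machinery: it identifies the law of $T_{0}$ under $P_{x}^{1/f}$ with that of the last passage time $l_{x}$ under $P_{0}^{f}$ (a direct time-reversal fact from Revuz--Yor), invokes the density formula $P_{j}^{f}[l_{x}\in dt]=q^{f}(t,j,x)/f(x)\,dt$, applies Fubini, and then lets $j\downarrow 0$. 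Your approach is economical once Lemmas~\ref{lem:tightness} and~\ref{lem:key} are available; the paper's approach buys a self-contained verification that does not rely on the weak convergence $P_{0}^{\tilde f_{j}}\Rightarrow P_{0}^{f}$ or the Skorokhod-type construction behind it, which is the stated purpose of presenting Proposition~\ref{prop:doubleintegral} as an alternative argument.
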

\begin{proof}
The law of $T_{0}$ under $P_{x}^{1/f}, x>0$ is the same as that of $l_{x}=\sup \{t > 0: X_{t}=x\}$ under $P_{0}^{f}$
according to Exercise 4.18 in Chapter VII of \citet{revuzyor}.
Thus, the right-hand side of the conclusion is equal to
\begin{eqnarray}
\int_{0}^{\infty} \lambda(x)P_{0}^{f} [\rho(l_{x})] f(x)h(f(x)) m_{f}(dx).
\end{eqnarray}

Let $q^{f}(t,x,\cdot)$ be the density of $X_{t}$ under $P_{x}^{f}$ with respect to the speed measure $m_{f}$.
Exercise 4.16 in Chapter VII of \citet{revuzyor} can be applied to $X$ under $P_{j}^{f}, j > 0$.
Subsequently, the law of $l_{x}$ under $P_{j}^{f}$ is expressed as
\begin{eqnarray}
P_{j}^{f}[l_{x}\in dt] = \frac{q^{f}(t,j,x)}{f(x)} dt.
\end{eqnarray}
Let $l_{x}^{T}=\sup \{t < T: X_{t}=x\}$ for any $T > 0$.
Then, we obtain
\begin{eqnarray}
P_{j}^{f}[l_{x}^{T} \neq l_{x}]
= P_{j}^{f}[l_{x} > T]
\le P_{0}^{f}[l_{x} > T] \xrightarrow[T \rightarrow \infty]{} 0
\end{eqnarray}
for $j > 0$.
The convergence $\lim_{j\downarrow 0}P_{j}^{f} [\rho(l_{x})]=P_{0}^{f} [\rho(l_{x})]$
follows from
\begin{eqnarray}
\left|P_{j}^{f} [\rho(l_{x})] - P_{0}^{f} [\rho(l_{x})] \right|
\le 
\left|P_{j}^{f} [\rho(l_{x}^{T})] - P_{0}^{f} [\rho(l_{x}^{T})] \right|
+ 4 ||\rho||_{\infty}P_{0}^{f}[l_{x} > T].
\end{eqnarray}
Therefore, we obtain
\begin{eqnarray}
\int_{0}^{\infty} \lambda(x)P_{j}^{f} [\rho(l_{x})] f(x)h(f(x)) m_{f}(dx)
&=& \int_{0}^{\infty} \lambda(x)\left(\int_{0}^{\infty}\rho(t)\frac{q^{f}(t,j,x)}{f(x)} dt \right) f(x)h(f(x)) m_{f}(dx)\nonumber\\
&=& \int_{0}^{\infty} \rho(t)\left(\int_{0}^{\infty}\lambda(x)h(f(x)) q^{f}(t,j,x) m_{f}(dx)\right) dt  \nonumber\\
&=& \int_{0}^{\infty} \rho(t)P_{j}^{f}[\lambda(X_{t})h(f(X_{t}))] dt.
\end{eqnarray}
The conclusion follows by letting $j \downarrow 0$.
\end{proof}

\subsection{Forward prices as probabilities}
\label{sec:as_prob}
In contrast to the case of a true martingale,
the forward price in the presence of a bubble is a concave function with respect to the underlying price.
This observation, together with the work of \citet{profeta2010}, inspires us to investigate the distributions that are implied in the forward price.
We use the boundary condition (\ref{eq:boundary})
and Doob's maximal identity, which is employed for deriving the expressions of European put and call option prices in terms of last-passage times (Theorems 2.1 and 2.2 of \citet{profeta2010}).

Prior to the derivation, we introduce a generalization of the law of the Bessel meander.
Suppose that $v^{*}(\tau,\infty) < \infty$, and let $M_{\tau}^{1/f}$ be the probability measure on $\mathcal{F}_{\tau}^{0}$ that is defined by
\begin{eqnarray}
M_{\tau}^{1/f} := \frac{f(X_{\tau})}{P_{0}^{f}[f(X_{\tau})]}P_{0}^{f}.
\end{eqnarray}
In the case of $f(x)=x^{-2\nu}, \nu > 0$,
the law of $X$ under the probability $M_{\tau}^{1/f}$ is that of the Bessel meander (see Chapter 3.6 of \citet{mansuy2008}).
The coordinate process $X$ satisfies the stochastic differential equation
\begin{eqnarray}
X_{t} = \tilde{\beta_{t}} - \frac{1}{2} \int_{0}^{t} T_{f}(X_{u})du
+\int_{0}^{t} m(\tau-u,X_{u}) du, \; t \le \tau, 
\end{eqnarray}
where $\tilde{\beta}$ is an $M_{\tau}^{1/f}$-Brownian motion starting at $0$
and $m(t,x) := \frac{d}{dx}\log v^{*}(t,f(x))$.
The integrand of the drift term has the following expression:
\begin{eqnarray}
  -\frac{1}{2}T_{1/f}(X_{u})
  + \left. \frac{d}{dx} \log P_{x}^{1/f}[\tau-u<T_{0}] \right|_{x=X_{u}}.
\end{eqnarray}

The forward prices are interpreted in terms of the probabilities $P_{0}^{f}$, $P_{\cdot}^{1/f}$, and $M_{\tau}^{1/f}$, as in the following proposition.
We remark that the sublinearity (\ref{eq:sublinear}) for each $\tau>0$ has already been obtained in Corollary 6.1 of \citet{Cetin2018}.
\begin{proposition}
\label{prop:distrib}
For $\tau > 0$ and $y = f(x) > 0$, we obtain
\begin{eqnarray}
v_{y}^{*}(\tau,y) &=& P_{0}^{f}[X_{\tau} \le x],\\
\frac{v^{*}(\tau,y)}{y} &=& P_{0}^{f}\left[\inf_{u>\tau}X_{u} \le x \right],\\
\frac{v^{*}(\tau,y)}{v^{*}(\tau,\infty)} &=& M_{\tau}^{1/f}\left[P_{X_{\tau}}^{1/f}\left[\sup_{u>0}X_{u} > x \right]\right],
\end{eqnarray}
where we assume $v^{*}(\tau,\infty)<\infty$ for the third equation.
In particular, $v^{*}(\tau,y)$ is of strictly sublinear growth in $y$ for each $\tau >0$:
\begin{eqnarray}
\lim_{y \rightarrow \infty} \frac{v^{*}(\tau,y)}{y} = 0\label{eq:sublinear}
\end{eqnarray}
and $v_{y}^{*}(\tau,\infty)=0$.
\end{proposition}
\begin{proof}
The boundary condition (\ref{eq:boundary}) with $h=1_{[y,\infty)}$ implies that
\begin{eqnarray}
P_{0}^{f}[f(X_{\tau}) \ge y] = -\int_{y}^{\infty} v_{yy}^{*}(\tau,\eta)d\eta
= v_{y}^{*}(\tau,y)-v_{y}^{*}(\tau,\infty),
\end{eqnarray}
which yields $v_{y}^{*}(\tau,0)=1$, $v_{y}^{*}(\tau,\infty)=0$ and the first equality.
Then, the boundary condition (\ref{eq:boundary}) and integration-by-parts formula (\ref{eq:ibp2}) yield
\begin{eqnarray}
v^{*}(\tau,y) = -\int_{0}^{\infty} \left(y \wedge \eta \right)v_{yy}(\tau,\eta) d\eta
= P_{0}^{f}[y \wedge f(X_{\tau})].
\end{eqnarray}
We note that $f(X)$ is a non-negative local martingale under the probability measure $P_{x}^{f}$ that converges $P_{x}^{f}$ almost surely to $0$ as $t \longrightarrow \infty$.
Based on Doob's maximal identity (see, for example, II.3.12 of \citet{revuzyor}), we obtain
\begin{eqnarray}
P_{\xi}^{f}\left[\sup_{u>0}f(X_{u}) \ge y \right] = 1 \wedge \frac{f(\xi)}{y}
\end{eqnarray}
for $\xi > 0$
and
\begin{eqnarray}
\frac{v^{*}(\tau,y)}{y}
= \frac{1}{y}P_{0}^{f}[f(X_{\tau})\wedge y]
= P_{0}^{f}\left[P_{X_{\tau}}^{f}\left[\inf_{u>0}X_{u} \le x \right] \right]
= P_{0}^{f}\left[\inf_{u>\tau}X_{u} \le x \right].
\end{eqnarray}
This quantity tends to $0$ as $y\longrightarrow \infty$
because of $P_{0}^{f}[X_{\tau}>0, \lim_{u \rightarrow \infty}X_{u} = \infty]=1$.
Finally, the left-hand side of the third formula is expressed as
\begin{eqnarray}
\frac{v^{*}(\tau,y)}{v^{*}(\tau,\infty)}
= P_{0}^{f}\left[\left(1 \wedge \frac{y}{f(X_{\tau})}\right) \frac{f(X_{\tau})}{P_{0}^{f}[f(X_{\tau})]}\right]
= M_{\tau}^{1/f}\left[1 \wedge \frac{1/f(X_{\tau})}{1/y}\right].
\end{eqnarray}
By applying Doob's maximal identity with $f$ replaced with $1/f$, we obtain
\begin{eqnarray}
\frac{v^{*}(\tau,y)}{v^{*}(\tau,\infty)}
= M_{\tau}^{1/f}\left[P_{X_{\tau}}^{1/f}\left[\sup_{u>0} 1/f(X_{u}) > 1/y \right]\right]
= M_{\tau}^{1/f}\left[P_{X_{\tau}}^{1/f}\left[\sup_{u>0} X_{u} > x \right]\right].
\end{eqnarray}
\end{proof}

\subsection{Finiteness of $v^{*}(\tau,\infty)$}
\label{sec:finiteness}
We consider the problem of $v^{*}(\tau,\infty) < \infty$ for $\tau > 0$.
Theorem 4.1 of \citet{Tysk2009} provides a sufficient condition, which states that
if constants $\varepsilon > 0$ and $p > 2$ exist such that
\begin{eqnarray}
\sigma^{2}(y) \ge \varepsilon y^{p}
\end{eqnarray}
for all $y>1/\varepsilon$,
$v^{*}(\tau,y) = o(y^{\delta})$ for any $\tau > 0$ and $\delta > 0$,
and $v^{*}(\tau,\infty) < \infty$ if the above inequality holds for $p>3$.

We obtain the following criterion, which is satisfied by the above case with $\lambda = p-2$.
Refer to (\ref{eq:cev_power}) for an explicit example.
\begin{proposition}
Suppose that a constant $\lambda >0$ exists
such that
\begin{eqnarray}
\int_{1}^{\infty} \eta^{q} \frac{\eta \wedge y}{\sigma(\eta)^{2}} d\eta = O(y^{(q-\lambda)_{+}})
\end{eqnarray}
for all $q \in [0,1+\lambda)$.
Then, we obtain $v^{h}(\tau,\infty) < \infty$ for each $\tau >0$ and $h(y)=y^{\alpha}$ with $\alpha \in [0,1+\lambda)$.
\end{proposition}
\begin{proof}
Let $h$ be a bounded, non-negative continuous function with $h(y)=0$ for $y \in [0,1]$.
Then, for $0 < \delta < \tau$, Theorem \ref{thm:main} yields
\begin{eqnarray}
\frac{1}{2}\int_{\delta}^{\tau} v^{h}(u,\infty) du
&=& -\int_{\delta}^{\tau} \left(\int_{0}^{\infty} v_{\tau}^{*}(u,y)\frac{h(y)}{\sigma(y)^{2}} dy\right) du \nonumber\\
&=& \int_{0}^{\infty} (v^{*}(\delta,y)-v^{*}(\tau,y))\frac{h(y)}{\sigma(y)^{2}} dy \nonumber\\
&\le& \int_{1}^{\infty} v^{*}(\delta,y)\frac{h(y)}{\sigma(y)^{2}} dy\nonumber\\
&=& \int_{1}^{\infty} P_{0}^{f}[f(X_{\delta})\wedge y]\frac{h(y)}{\sigma(y)^{2}} dy.
\end{eqnarray}
The final step is based on Proposition \ref{prop:distrib}.
According to the monotone convergence theorem, 
the above inequality is valid for $h(y)=(y-1)_{+}^{q}$ for $q \in [0,1+\lambda)$.
Furthermore, $a_{q},b_{q}>0$ exist such that
\begin{eqnarray}
\frac{1}{2}\int_{\delta}^{\tau} v^{h}(u,\infty) du
\le P_{0}^{f}\left[\int_{1}^{\infty} h(y) \frac{(f(X_{\delta})\vee 1)\wedge y}{\sigma(y)^{2}} dy\right]
\le a_{q} P_{0}^{f}\left[f(X_{\delta})^{(q-\lambda)_{+}}\right] + b_{q}.
\end{eqnarray}
For a given $\alpha \in [0,1+\lambda)$, let $N$ be the integer such that $\alpha -N\lambda >0 \ge \alpha -(N+1)\lambda$,
and $q_{n}=\alpha-(N-n)\lambda$ for $n=0,1,\dots,N$.
By induction, $v^{h}(\tau,\infty) < \infty$ holds for almost all $\tau > 0$, where $h(y):=y^{\alpha}$.
For any $\tau >0$, $\delta \in (0,\tau)$ exists such that $v^{h}(\delta,\infty) < \infty$.
For any $x > 0$, we obtain
\begin{eqnarray}
v^{h}(\tau,f(x)) = P_{x}^{f}[v(\delta,f(X_{\tau-\delta}))] \le v^{h}(\delta,\infty).
\end{eqnarray}
\end{proof}

Under the assumption $v^{*}(\tau,\infty) < \infty$,
we present two results regarding the convergences of $v^{*}(\tau,y)$ and its average to $v^{*}(\tau,\infty)$.
\begin{proposition}
\label{prop:convergence_rate}
Suppose that $v^{*}(\tau,\infty) < \infty$ holds.
Then, we obtain
\begin{eqnarray}
\left(v^{*}(\tau,\infty)-v^{*}(\tau,y)\right)
+ yv_{y}^{*}(\tau,y) = P_{0}^{f}[f(X_{\tau}):f(X_{\tau}) > y]
\end{eqnarray}
and
\begin{eqnarray}
\frac{v^{*}(\tau,\infty)+v^{*}(\tau,y)}{2}
-\int_{0}^{y}v^{*}(\tau,\eta)\frac{d\eta}{y}
= \frac{1}{2}P_{0}^{f}\left[f(X_{\tau}) \left(1\wedge \frac{f(X_{\tau})}{y}\right)\right]
\end{eqnarray}
for any $\tau > 0$ and $y>0$.
In particular, $\lim_{y\rightarrow \infty} yv_{y}^{*}(\tau,y)=0$.
\end{proposition}
\begin{proof}
From Proposition \ref{prop:doubleintegral}, we obtain
\begin{eqnarray}
P_{0}^{f}[f(X_{\tau}): f(X_{\tau}) \le y]
&=& -\int_{f^{-1}(y)}^{\infty} \frac{d}{d\tau} P_{x}^{1/f}[\tau < T_{0}] f(x)^{2} m_{f}(dx) \nonumber\\
&=& -\int_{f^{-1}(y)}^{\infty} \frac{d}{d\tau} v^{*}(\tau,f(x)) f(x) m_{f}(dx).
\end{eqnarray}
By changing the variable $y=f(x)$ and (\ref{eq:bse}), this is equal to
\begin{eqnarray}
-\int_{0}^{y} v_{\tau}^{*}(\tau,\eta) \eta \frac{2d\eta}{\sigma(\eta)^{2}}
= -\int_{0}^{y} v_{yy}^{*}(\tau,\eta) \eta d\eta
= - yv_{y}^{*}(\tau,y) + v^{*}(\tau,y),
\end{eqnarray}
which yields the first formula.

Using the integration-by-parts and first formulas, we obtain
\begin{eqnarray}
2\int_{0}^{y}v^{*}(\tau,\eta)\frac{d\eta}{y}
&=& \int_{0}^{y} \frac{v^{*}(\tau,\eta)}{\eta}(\eta^{2})^{\prime}\frac{d\eta}{y}\nonumber\\
&=& v^{*}(\tau,y)+v^{*}(\tau,\infty)-\int_{0}^{y} (v^{*}(\tau,\infty)-v^{*}(\tau,\eta)+\eta v_{y}^{*}(\tau,\eta))\frac{d\eta}{y} \nonumber\\
&=& v^{*}(\tau,y)+\int_{0}^{y} P_{0}^{f}[f(X_{\tau}): f(X_{\tau}) > \eta]\frac{d\eta}{y}.
\end{eqnarray}
Then, the left-hand side of the second equation is 
\begin{eqnarray}
\frac{1}{2}\int_{0}^{y} P_{0}^{f}[f(X_{\tau}):X_{\tau} < f^{-1}(\eta)] \frac{d\eta}{y}
= \frac{1}{2}P_{0}^{f}\left[f(X_{\tau})\left(1 \wedge \frac{f(X_{\tau})}{k}\right)\right].
\end{eqnarray}
\end{proof}  

\subsection{Uniqueness}
\label{sec:uniqueness}
\citet{Tysk2009} provide a sufficient condition for the Black--Scholes equation (\ref{eq:bse}) to have a unique solution.
Theorem 4.3 of \citet{Tysk2009} claims that, for an initial condition $h$ of strictly sublinear growth, 
there exists a unique solution among the class of functions of strictly sublinear growth.
In this case, a function $h:[0,\infty) \longrightarrow \mathbb{R}$ is of strictly sublinear growth if $\lim_{y\rightarrow \infty}\frac{h(y)}{y}=0$,
and a function $v:[0,T] \times [0,\infty) \longrightarrow \mathbb{R}$ is of strictly sublinear growth
if $|v(\tau,y)| \le g(y)$ for some $g:[0,\infty) \longrightarrow \mathbb{R}$ of strictly sublinear growth.

\citet{Cetin2018} extends this to an initial condition $h$ of at most linear growth,
expanding the class of solutions to that of functions that satisfy
$\lim_{n\rightarrow \infty}v(\tau_{n},y_{n})/y_{n} = 0$ if $y_{n} \nearrow \infty$ and $\tau_{n}\rightarrow \tau$ for each $\tau > 0$
and
\begin{eqnarray}
\sup_{(\tau,\eta) \in [\tau_{1},\tau_{2}]\times (y,\infty)}\frac{v(\tau,\eta)}{\eta} < \infty\label{eq:unif_growth}
\end{eqnarray}
for any $y>0$ and $0 \le \tau_{1} < \tau_{2} < \infty$\footnote{\citet{Cetin2018} states that the second condition, which is cited as (D.3) in \citet{Cetin2018}, follows from the first one.
Refer to Theorem 6.2 and Appendix D of \citet{Cetin2018} for the precise statement and proof.}.
We present two non-negative classical solutions to (\ref{eq:bse}) with $\sigma(y)=y^{2}$ and the initial condition $h(y)=0$
such that at least one of the above conditions are violated:
\begin{eqnarray}
\Delta^{(0)}(\tau,y) &=& \tau^{-\frac{3}{2}} \mathrm{e}^{-\frac{1}{2y^{2}\tau}}, \\
\Delta^{(1)}(\tau,y) &=& \frac{y}{\sqrt{\tau}} \mathrm{e}^{-\frac{1}{2y^{2}\tau}}.
\end{eqnarray}
The asymptotic behaviors as $y \longrightarrow \infty$ for each $\tau>0$ are $\Delta^{(0)}(\tau,\infty) = \tau^{-\frac{3}{2}}$
and $\Delta^{(y)}(\tau,y) \sim y /\sqrt{\tau}$,
whereas $\Delta^{(0)}(\tau,y) \sim y^{3}$ and $\Delta^{(1)}(\tau,y) \sim y^{2}$ as $(\tau,y) \longrightarrow (0,\infty)$
such that $y\sqrt{\tau}$ is constant.

Our boundary condition (\ref{eq:boundary}) causes the Black--Scholes equation (\ref{eq:bse}) to have a unique solution among an appropriate class of considered functions.
We remark that 
our boundary condition (\ref{eq:boundary}) with the case $h(y)=y$; that is,
\begin{eqnarray}
v(\tau,\infty)
= -\int_{0}^{\infty} yv_{yy}(\tau,y)dy,  \label{eq:f_boundary}
\end{eqnarray}
holds with inequality
for a function with one variable under certain assumptions, as follows:
\begin{proposition}
Let $g:[0,\infty) \longrightarrow [0,\infty)$ be in $C^{2}((0,\infty))$
with $g_{yy} \le 0$
and $\lim_{y\rightarrow 0} yg_{y}(y)=0$.
Then, we obtain
\begin{eqnarray}
g(\infty) \ge g(\infty)-g(0) \ge - \int_{0}^{\infty} yg_{yy}(\tau,y) dy. \label{eq:boundary_ineq}
\end{eqnarray}
In the case of $g(\infty)<\infty$,
equality holds in the second inequality if and only if $\lim_{y\rightarrow \infty}yg_{y}(y)=0$.
\end{proposition}
\begin{proof}
We remark that $g_{y}(\tau,y) \ge 0$, which follows from $g \ge 0$ and $g_{yy} \le 0$.
Using the integration-by-parts formula, we obtain
\begin{eqnarray}
g(\tau,\infty)-g(\tau,0)
\ge \lim_{y\rightarrow \infty} (g(y)-yg_{y}(y))-g(0)
= - \int_{0}^{\infty} yg_{yy}(y) dy.
\end{eqnarray}
\end{proof}
Then, the duality relation
\begin{eqnarray}
\inf_{v} v(\tau,\infty)
\ge \sup_{v} \left\{ -\int_{0}^{\infty} yv_{yy}(\tau,y)dy \right\}
\end{eqnarray}
holds among an appropriate class of functions $v$,
for which the optimal values are attained by the stochastic solution $v^{h}$. 
The equality, which corresponds to our boundary condition (\ref{eq:f_boundary}),
holds for $v^{h}$ and $\Delta^{(0)}$,
and does not hold for $v(\tau,y)=y$ and $v=\Delta^{(1)}$:
\begin{eqnarray}
\infty = v(\tau,\infty) > - \int_{0}^{\infty} yv_{yy}(\tau,y) dy = 0.
\end{eqnarray}
To achieve uniqueness by adding our boundary condition (\ref{eq:boundary}),
a specific condition is required to exclude functions such as $\Delta^{(0)}$.
The following proposition is one criterion for uniqueness, in which $\Delta^{(0)}$ does not satisfy (\ref{eq:fubini}).

\begin{proposition}
Suppose that $v$ is a non-negative classical solution to (\ref{eq:bse}) with $v(0,y)=h(y)$
and that 
$\Delta := v - v^{h}$ satisfies
\begin{eqnarray}
\int_{0}^{\tau} \int_{0}^{\infty} |\Delta_{\tau}(u,y)|\frac{y}{\sigma(y)^{2}} dy du < \infty \label{eq:fubini}
\end{eqnarray}
for each $\tau>0$.
Then, we obtain $v=v^{h}$, if $-\int_{0}^{\infty} \Delta_{\tau}(\tau,y)\frac{y}{\sigma(y)^{2}} dy \ge 0$ for each $\tau>0$,
which is the case if $v^{*}(\tau,\infty) < \infty$ and $v$ satisfies (\ref{eq:f_boundary}).
\end{proposition}
\begin{proof}
We remark that $\Delta$ is a non-negative solution to (\ref{eq:bse}) with $\Delta(0,y)=0$.
According to Fubini's theorem, we obtain
\begin{eqnarray}
0 \le -2 \int_{0}^{\tau} \left(\int_{0}^{\infty} \Delta_{\tau}(u,y)\frac{y}{\sigma(y)^{2}} dy \right)du
= -2 \int_{0}^{\infty} \Delta(\tau,y)\frac{y}{\sigma(y)^{2}} dy 
\le 0
\end{eqnarray}
for each $\tau>0$.
\end{proof}

We also provide a criterion for uniqueness in terms of integrability, rather than our boundary condition (\ref{eq:boundary}).
We remark that 
the second integral of the left-hand side of (\ref{eq:uicondition}) is finite
if $h$ and $v$ are of at most liner growth
according to the presence of a bubble; that is, (c) of Assumption \ref{ass:ass_vol}.
\begin{proposition}
\label{prop:uniqueness}
Suppose that $v^{h}$ is in $C^{1,2}((0,\infty)\times (0,\infty))$,
that $v$ is a non-negative classical solution to (\ref{eq:bse}) with $v(0,y)=h(y)$,
and that 
$\Delta := v - v^{h}$ satisfies
\begin{eqnarray}
\int_{0}^{\tau} \int_{0}^{\infty} |\Delta_{\tau}(u,y)|\frac{dy}{\sigma(y)^{2}} du < \infty \label{eq:fubini2}
\end{eqnarray}
for each $\tau>0$.
Then, we obtain $v=v^{h}$ if
\begin{eqnarray}
\Delta(\tau,0)=\lim_{y\rightarrow 0}yv_{y}(\tau,y)=v_{y}(\tau,\infty)=0
\end{eqnarray}
and
\begin{eqnarray}
\int_{0}^{\tau} \left(\sup_{\eta \in (0,y)} \Delta(u,\eta)\right) du
+\int_{y}^{\infty} \left(\sup_{u \in (0,\tau]} \Delta(u,\eta)\right) \frac{d\eta}{\sigma(\eta)^{2}}
< \infty \label{eq:uicondition}
\end{eqnarray}
for each $\tau>0$ and $y > 0$.
\end{proposition}
\begin{proof}
We remark that $\Delta$ is a non-negative classical solution to (\ref{eq:bse}) with $\Delta(0,y)=0$.
Using the integration-by-parts formula (\ref{eq:ibp2}), we obtain
\begin{eqnarray}
\Delta(\tau,y)
= -\int_{0}^{\infty} (y\wedge \eta) \Delta_{yy}(\tau,\eta)d\eta
\end{eqnarray}
for $\tau>0$ and $y>0$.
For $\delta \in (0,\tau)$ and $\varepsilon \in (0,y)$, we obtain
\begin{eqnarray}
\int_{\delta}^{\tau} \Delta(u,y)du
- \int_{\delta}^{\tau} \Delta(u,\varepsilon)du
&=& 
2 \int_{0}^{\infty} \left(\Delta(\delta,\eta)-\Delta(\tau,\eta)\right) \frac{y\wedge \eta-\varepsilon \wedge \eta}{\sigma(\eta)^{2}}d\eta \nonumber\\
&\le&
2 (y-\varepsilon)\int_{\varepsilon}^{\infty} \Delta(\delta,\eta) \frac{d\eta}{\sigma(\eta)^{2}}.
\end{eqnarray}
By letting $\delta \downarrow 0$, and then letting $\varepsilon \downarrow 0$,
the dominated convergence theorem yields
\begin{eqnarray}
0 \le \int_{0}^{\tau} \Delta(u,y)du
\le \int_{0}^{\tau} \Delta(u,\varepsilon)du \longrightarrow 0.
\end{eqnarray}
\end{proof}

\section{Example}
\label{sec:example}
\subsection{Constant elasticity of variance model}
\label{sec:cev}
The constant elasticity of variance (CEV) model is expressed as
\begin{eqnarray}
Y_{t} = y + \int_{0}^{t} a Y_{u}^{1+\frac{1}{2\nu}} d\beta_{u}, \label{eq:cev}
\end{eqnarray}
where $a>0$ and $\nu \neq 0$ are constants; that is, $\sigma(y) = a y^{1+\frac{1}{2\nu}}$.
The local martingale $Y$ is a strict local martingale if and only if $\nu >0$, which we assume hereafter.
\citet{Veestraeten2017} studies option prices and their Greeks extensively, emphasizing the multiplicity of call option prices under this model.

From $\sigma(y) = a y^{1+\frac{1}{2\nu}}$, we obtain 
$f(x)=(2\nu/a)^{2\nu}x^{-2\nu}$,
$f^{-1}(y)=(2\nu/a)y^{-1/(2\nu)}$,
$T_{f}(x)=-(1+2\nu)/x$, and
$T_{1/f}(x)=-(1-2\nu)/x$.
Then, the law of the underlying process $Y$ is that of $f(X)$,
where $X$ is a Bessel process with index $\nu$ under $P_{x}^{f}$.
The stochastic solution to the corresponding Black--Scholes equation for $h(y)=y$ is
\begin{eqnarray}
v^{*}(\tau,y) = y \int_{0}^{s} \eta^{\nu-1} \mathrm{e}^{-\eta/2} \frac{d\eta}{2^{\nu} \Gamma(\nu)}
= \frac{1}{\nu \Gamma(\nu)}\left(\frac{2\nu^{2}}{a^{2}\tau}\right)^{\nu}
\int_{0}^{1} \mathrm{e}^{-\frac{2\nu^{2}}{a^{2}\tau}\left(\frac{\zeta}{y}\right)^{1/\nu}}d\zeta\label{eq:cev_forward}
\end{eqnarray}
and its derivative with respect to $y$ is
\begin{eqnarray}
v_{y}^{*}(\tau,y)
= \int_{0}^{s} \eta^{\nu-1} \mathrm{e}^{-\eta/2} \frac{d\eta}{2^{\nu} \Gamma(\nu)}
-\frac{s^{\nu} \mathrm{e}^{-s/2}}{2^{\nu} \nu \Gamma(\nu)},
\end{eqnarray}
where $s := (2\nu/a)^{2}/(y^{1/\nu}\tau)$ and $\Gamma$ is the gamma function.
By letting $y\rightarrow \infty$ in the second expression of (\ref{eq:cev_forward}),
we obtain
\begin{eqnarray}
v^{*}(\tau,\infty)
=  \frac{1}{\nu\Gamma(\nu)} \left(\frac{2\nu^{2}}{a^{2}\tau}\right)^{\nu}.\label{eq:cev_forward_inf}
\end{eqnarray}
The probabilities in Proposition \ref{prop:distrib} are obtained explicitly.
In particular,
\begin{eqnarray}
\frac{v^{*}(\tau,y)}{y} = P_{0}^{f}\left[\inf_{u>\tau}X_{u} \le x \right]
= \int_{0}^{s} \eta^{\nu-1} \mathrm{e}^{-\eta/2} \frac{d\eta}{2^{\nu} \Gamma(\nu)},
\end{eqnarray}
which is also equal to $P_{x}^{1/f}[\tau < T_{0}]$; that is,
the complementary distribution function for a Bessel process with index $-\nu$ starting from $x$ to reach $0$.

The asymptotic behaviors of $v^{*}(\tau,y)/y$ and $v_{y}^{*}(\tau,y)$ are determined by how $(\tau,y)$ approaches $(0,\infty)$,
because they are dependent only on $s$.
For a fixed $y > 0$, $v^{*}(\cdot,y)$ decreases from $v^{*}(0,y)=y$ to $v^{*}(\infty,y)=0$,
and for a fixed $\tau > 0$, $v^{*}(\tau,\cdot)$ increases from $v^{*}(\tau,0)=0$ to $v^{*}(\tau,\infty)$.
The boundary condition proposed by \citet{EkstromLotstedtSydowTysk2011} is reasonable
because $v_{y}^{*}(\tau,y)$ tends to $0$ as $y \rightarrow \infty$ for each $\tau>0$.
However, the convergence is not uniform
and $v_{y}^{*}(\tau,y)$ tends to $1$ as $\tau \downarrow 0$ for each $y > 0$.

Suppose that $a=1$.
The derivative price that pays $h(y) = y^{p}$, $p \in [0,1+1/\nu)$ with time-to-maturity $\tau>0$ is 
\begin{eqnarray}
v^{h}(\tau,y)
= \frac{\Gamma(\nu+1-\nu p)}{\Gamma(\nu+1)}
\left(\frac{x}{2\tau}\right)^{\nu p} \Phi \left(\nu p,\nu+1,-\frac{x}{2\tau} \right)y^{p}, \label{eq:cev_power}
\end{eqnarray}
where $\Phi$ is Kunner's function.
Then, a constant $C_{\nu,p} > 0$ exists such that $v^{h}(\tau,y) < C y^{p}$ for all $\tau > 0$ and $y > 0$, owing to $\sup_{x>0} x^{a} \Phi (a,b,-x) < \infty$.
This indicates that the second integral of (\ref{eq:uicondition}) is finite for a function $v$ such that $\sup_{u \in [0,\tau), y > 1} v(\tau,y)/y^{p} < \infty$.
Uniqueness holds among the class of functions that satisfy (\ref{eq:fubini2}) and the assumptions pertaining to the neighborhood of $0$ of Proposition \ref{prop:uniqueness}.
In the case of $\nu=1/2$,
non-negative classical solutions $v^{h}+ \alpha \Delta^{(i)}$ for $i=0,1$ and $\alpha > 0$ are excluded by the boundary condition (\ref{eq:boundary})
if $h$ is not identical to $y$.
The non-negative solution $v = v^{*}+ \alpha \Delta^{(0)}$ for $h(y)=y$ satisfies the boundary condition (\ref{eq:f_boundary}).
Our numerical procedure presented in Section \ref{sec:ourprocedure} excludes this solution because our solution is non-increasing with respect to $\tau$ (see Proposition \ref{prop:discreteanalysis}), 
whereas $v_{\tau}$ takes positive values:
\begin{eqnarray}
v_{\tau}(\tau,y)
= \left(-\sqrt{\frac{2}{\pi}}+\frac{\alpha}{2\tau}\left(\frac{1}{y^{2}\tau}-3\right) \right)\Delta^{(0)}(\tau,y).
\end{eqnarray}

\subsection{Quadratic normal volatility model}
The quadratic normal volatility model is expressed as
\begin{eqnarray}
Y_{t} = y + \int_{0}^{t} b \frac{(Y_{u}-r)(Y_{u}-l)}{r-l} d\beta_{u}
\end{eqnarray}
for $l < r$ and $b>0$.
This model has attracted significant attention because it allows explicit formulas for option prices and can be effectively calibrated to market volatility smiles.
\citet{Andersen2011} derives pricing formulas for all root configurations of the diffusion coefficient,
and \citet{CarrFisherRuf2013} investigate its properties.
Hereafter, we assume that $y > r=0$, in which case $Y$ is a strict local martingale and strictly positive.

From $\sigma(y) = -\frac{b}{l} y(y-l)$, we obtain 
$f(x)= -\frac{l}{2}(1-\coth \frac{b}{2}x)$,
$x=f^{-1}(y)=\frac{1}{b}\log\frac{y-l}{y}$,
$T_{f}(x)=-b \coth \frac{b}{2}x$, and
$T_{1/f}(x)=b$.
The diffusion $X$ under $P_{x}^{1/f}$ is a Brownian motion with drift $-b/2$.
The stochastic solution to the corresponding Black--Scholes equation for $h(y)=y$ is
\begin{eqnarray}
v^{*}(\tau,y) =y - (y-l)N(d_{-})-yN(d_{+})
= y \left(N(-d_{-})-N(d_{+})\right) + l N(d_{-}) \label{eq:qvm_forward}
\end{eqnarray}
and its derivative with respect to $y$ is
\begin{eqnarray}
v_{y}^{*}(\tau,y)
= \left(N(-d_{-})-N(d_{+})\right)
+ \frac{-l/\sqrt{b^{2}\tau}}{y(y-l)}\left(y \left(N^{\prime}(-d_{-})-N^{\prime}(d_{+})\right) + l N^{\prime}(d_{-}) \right),
\end{eqnarray}
where $N$ is the distribution function of the standard normal distribution and
\begin{eqnarray}
d_{\pm} = \frac{\log \frac{y}{y-l} \pm \frac{b^{2}\tau}{2}}{\sqrt{b^{2}\tau}}= -\frac{x}{\sqrt{\tau}} \pm \frac{b}{2}\sqrt{\tau}.
\end{eqnarray}
By letting $y\rightarrow \infty$ in the second expression of (\ref{eq:qvm_forward}),
we obtain
\begin{eqnarray}
v^{*}(\tau,\infty)
=-l\sqrt{\frac{2}{\pi}} \frac{\mathrm{e}^{-b^{2}\tau/8} }{\sqrt{b^{2}\tau}}
+ lN \left(-\frac{\sqrt{b^{2}\tau}}{2} \right).
\end{eqnarray}
The probabilities in Proposition \ref{prop:distrib} are obtained explicitly.
In particular,
\begin{eqnarray}
\frac{v^{*}(\tau,y)}{y} = P_{0}^{f}\left[\inf_{u>\tau}X_{u} < x \right]
= \left(N(-d_{-})-N(d_{+})\right)+\frac{l}{y}N(d_{-}),
\end{eqnarray}
which is also equal to $P_{x}^{1/f}[\tau < T_{0}]$; that is,
the complementary distribution function for a Brownian motion with drift $-b/2$ starting from $x$ to reach $0$.

The asymptotic behaviors of $v^{*}(\tau,y)/y$ and $v_{y}^{*}(\tau,y)$ are determined by how $(\tau,y)$ approaches $(0,\infty)$.
For instance, let $(\tau,y)$ approach $(0,\infty)$ such that $s=x/\sqrt{\tau}$ is constant.
Then, $v^{*}(\tau,y)/y$ and $v_{y}^{*}(\tau,y)$ converge to
$N(s)-N(-s)$.

\section{Numerical analysis}
\label{sec:numerical}
\subsection{Description of procedures}
\label{sec:procedures}
We describe our numerical procedures
as well as those proposed in the literature to date,
namely in \citet{EkstromLotstedtSydowTysk2011}, \citet{SongYang2015}, \citet{Cetin2018}, and \citet{tsuzuki2023pitmans}.

\subsubsection{Our procedure}
\label{sec:ourprocedure}
We propose boundary conditions for finite difference methods in the case of $h(y)=y$ and in the other cases, respectively.
For the first case, using (\ref{eq:bse}) and (\ref{eq:boundary}),
we obtain
\begin{eqnarray}
v^{*}(\tau,\infty)
&=& -2\int_{0}^{\infty} v_{\tau}^{*}(\tau,y) \frac{y}{\sigma(y)^{2}} dy \nonumber\\
&=& -\int_{0}^{n} v_{yy}^{*}(\tau,y) ydy
-2\int_{n}^{\infty} v_{\tau}^{*}(\tau,y) \frac{y}{\sigma(y)^{2}} dy \nonumber\\
&=& v^{*}(\tau,n) - nv_{y}^{*}(\tau,n)
-2\int_{n}^{\infty} v_{\tau}^{*}(\tau,y) \frac{y}{\sigma(y)^{2}} dy.
\end{eqnarray}
We propose the following as our boundary condition:
\begin{eqnarray}
v_{y}^{*}(\tau,n) = -\varepsilon_{n} v_{\tau}^{*}(\tau,n),\;
\varepsilon_{n} = \frac{2}{n} \int_{n}^{\infty}  \frac{y}{\sigma(y)^{2}} dy,
\label{eq:bc}
\end{eqnarray}
for which the finite difference $\theta$-scheme, $\theta \in [0,1]$, is
\begin{eqnarray}
\theta \Delta_{y} v^{*}(\tau,n) + (1-\theta)\Delta_{y} v^{*}(\tau-\Delta \tau,n)
=
-\varepsilon_{n} \Delta_{\tau} v^{*}(\tau,n),\label{eq:dbc}
\end{eqnarray}
where $\Delta_{\tau} v^{*}(\tau,y)$ and $\Delta_{y} v^{*}(\tau,y)$
are first-order finite differences
with the time- and space-discretization steps $\Delta \tau$ and $\Delta y > 0$, respectively:
\begin{eqnarray}
\Delta_{\tau} v^{*}(\tau,y) &=& \frac{v^{*}(\tau,y)-v^{*}(\tau-\Delta \tau,y)}{\Delta \tau},\\
\Delta_{y} v^{*}(\tau,y) &=& \frac{v^{*}(\tau,y)-v^{*}(\tau,y-\Delta y)}{\Delta y}.
\end{eqnarray}
We remark that $n\varepsilon_{n} \longrightarrow 0$ owing to (c) of Assumption \ref{ass:ass_vol}.
Condition (\ref{eq:bc}) corresponds to a solution such that
the left-derivative of $v^{*}(\tau,\cdot)$ at $y=n$ 
is $\lim_{y \nearrow n}v_{y}^{*}(\tau,y)$
and
$v_{y}^{*}(\tau,y) = 0$ for $y>n$.

For the other cases,
we propose the following value as our boundary condition for $v^{h}(\tau,\cdot)$ at $y=n$:
\begin{eqnarray}
v^{h}(\tau,\infty)
= -2\int_{0}^{n} \Delta_{\tau} v^{*}(\tau,y) \frac{h(y)}{\sigma(y)^{2}} dy
-n \varepsilon_{n} \Delta_{\tau} v^{*}(\tau,n), \label{eq:eq:dbc2}
\end{eqnarray}
where $\Delta_{\tau}v^{*}(\tau,y)$ is computed in the above manner.
As remarked in Section \ref{sec:main},
we recommend this condition rather than that based on (\ref{eq:boundary2})
when the growth rate of $h(y)$ is more than linear.
More precisely, we do not recommend using the boundary condition 
\begin{eqnarray}
v_{y}^{h}(\tau,n) = -\varepsilon_{n}v_{\tau}^{h}(\tau,n),
\end{eqnarray}
which corresponds to (\ref{eq:bc}),
because the truncation error of $v^{h}$ is expected to be more significant 
than that of $v^{*}$.

We must specify the boundary condition at $y=0$ for numerical procedures, even if $Y$ does not reach the boundary.
The appropriate condition is $v(\tau,0) = h(0)$ because the boundary $y=0$ must be an absorbing point for a non-negative supermartingale.
This boundary condition is applied to all methods in this study.
Although strict positivity of $Y$ is assumed in Theorem \ref{thm:main},
our numerical procedure can be performed for $\sigma$ that violates (b) of Assumption \ref{ass:ass_vol},
and the effect for $Y$ to be absorbed at $0$ is considered to a certain extent.

In summary, our numerical procedure for $h(y)=y$ is the finite difference method
for the following partial differential equation on $(0,\infty) \times (0,n)$:
\begin{eqnarray}
\left\{ \begin{array}{ll}
v_{\tau} = \frac{1}{2}\sigma^{2}v_{yy}, \\
v(0,y) = h(y),\\
v(\tau,0) = h(0),\\
v_{y}(\tau,n) = -\varepsilon_{n}v_{\tau}(\tau,n),
\end{array} \right. \label{eq:bce_ourboundary}
\end{eqnarray}
and our numerical procedure for $h$ that is not identical to $y$
is that with the final equation replaced with (\ref{eq:eq:dbc2}).

\subsubsection{\citet{EkstromLotstedtSydowTysk2011}}
\citet{EkstromLotstedtSydowTysk2011} propose the Neumann boundary condition $v_{y}(\tau,n)=0$ for each $\tau>0$.

Although $v_{y}(\tau,\infty)=0$ holds for each $\tau>0$, as shown in Proposition \ref{prop:distrib},
the behavior in the neighborhood of $(0,n)$ is different from the expected behavior; that is,
$v_{y}^{*}(\tau,n)$ is continuously decreasing from $1$ to $0$ as $\tau$ increases from $0$.
Refer to Section \ref{sec:example} for explicit examples.
As a result, the numerical errors in the neighborhood of $(0,n)$, which are propagated throughout the domain, are not negligible.
The underestimation of $v_{y}(\tau,n)$ appears to cause that of the numerical solution.
Their numerical solutions are less than the true values, as demonstrated in Section \ref{sec:numericaltest},
and less than ours, as shown in Proposition \ref{prop:neumann}.
We also remark that
only the right-derivative of $v^{*}(\tau,\cdot)$ at $y=n$ is forced to be $0$ in our procedure.
The implicit finite difference scheme with their boundary condition is stable in the sense that it satisfies a discrete maximum principle.
Refer to Proposition \ref{prop:neumann} in Section \ref{sec:ourstability}.

\subsubsection{\citet{SongYang2015}}
\citet{SongYang2015} propose the condition $v(t,n)= 0$,
revising the initial conditions such that the initial-boundary datum is continuous at the corner $(\tau,y)=(0,n)$.
Although the discontinuity at the corner may cause oscillations,
we do not revise the payoff in this study because the results are sufficiently stable if the implicit scheme is adopted.
Then, the solutions are the prices of knock-out options;
that is, the contingent claim that pays $h(Y_{T})$ at time $T$ only if $\sup_{t < T} Y_{t} < n$.
The price of the knock-out option is the solution to (\ref{eq:bse}) with
\begin{eqnarray}
\left\{ \begin{array}{l}
v(0,y) = h(y),\\
v(\tau,n)=0.
\end{array} \right.
\end{eqnarray}
In particular, we denote the price of the knock-out forward contract as $v^{o,n}$.

The boundary condition $v(\tau,n)= 0$ produces numerical solutions that are substantially less than the true values, which is confirmed by the numerical tests in Section \ref{sec:numericaltest}.
Furthermore, the numerical solution as a function of $y$
is not increasing, but converges to $0$ as $y \longrightarrow n$, which is not a desirable property.
Thus, $n$ should be large to obtain accurate numerical results.

\subsubsection{\citet{Cetin2018}}
As mentioned in Section \ref{sec:problem},
\citet{Cetin2018} reduces the problem to the alternative Cauchy problem (\ref{eq:cetin}) that has a unique solution,
but the straightforward finite scheme of the alternative problem is the same as that of \citet{SongYang2015}.
Instead, we propose altering the problem by changing the variable $y=f(x)$, where $f$ is defined by (\ref{eq:def_f}).
Then, the stochastic solution to the original problem is expressed as $v^{h}(\tau,y)=yu(\tau,f^{-1}(y))$, where $u$ is the solution to 
\begin{eqnarray}
\label{eq:cetin2}
\left\{ \begin{array}{l}
u_{\tau} = \frac{1}{2}u_{xx}-\frac{1}{2}T_{1/f} u_{x}, \\
u(0,x) = h(f(x))/f(x),\\
u(\tau,0)=0.
\end{array} \right.
\end{eqnarray}
In particular, $u^{*}(\tau,x) := v^{*}(\tau,f(x))/f(x)$
is the complementary distribution function for $X$ to reach $0$ with respect to $P_{x}^{1/f}$:
\begin{eqnarray}
u^{*}(\tau,x) = P_{x}^{1/f}\left[\tau < T_{0}\right]. \label{eq:dual}
\end{eqnarray}
Refer to (\ref{eq:sublinear}) for the alternative probabilistic expression of $u^{*}$.

The change in boundary points prevents the accuracy deterioration that is caused by replacing the point at which the boundary condition is set (i.e., infinity) with a sufficiently large but finite value.
However, this may introduce a singularity at $x=0$ in the coefficient of (\ref{eq:cetin2}); that is, $T_{1/f}(0)=\infty$.
Although the existence and uniqueness of the solution are ensured,
whether the standard finite difference schemes for this equation are well behaved is unclear.
More precisely, finite schemes tend to be unstable if (\ref{eq:cetin2}) is convection dominated;
that is, if the convection or drift term is large compared with the diffusion term.
This is the case when Asian options are computed using finite methods (see, for example, \citet{Zvan1997}).
We investigate the stability of the finite schemes in Section \ref{sec:cetinstability}.

The change of variable by $y=f(x)$ is an ad hoc workaround.
Whether the choice of $f$ is optimal is not clear,
and $f$ is not necessarily obtained from $\sigma$ in an explicit form.
If $Y$ reaches the boundary $0$ with a positive probability,
another truncation error arises from imposing a boundary condition at a sufficiently large $x$ on (\ref{eq:cetin2}).

\subsubsection{\citet{tsuzuki2023pitmans}}
\label{sec:tsuzuki2023}
\citet{tsuzuki2023pitmans} provides a convergent sequence $v_{n}^{h} \longrightarrow v^{h}$ as $n \longrightarrow \infty$
for a non-negative, continuous, and non-decreasing $h$,
where $v_{n}^{h}$ is the solution to a Black--Scholes equation on $(0,\infty) \times (0,n)$
with a certain boundary condition at $y=n$ that is expressed with the prices $v^{o,n}$ of knock-out forward contracts.
More precisely, let $v_{n}^{h}$ be the solution to (\ref{eq:bse}) with
\begin{eqnarray}
\left\{ \begin{array}{l}
v(0,y) = h(y),\\
v(\tau,n)=\Theta_{n}^{h}(\tau),
\end{array} \right.
\end{eqnarray}
where
\begin{eqnarray}
\Theta_{n}^{h}(\tau) = -\int_{0}^{n} \lambda_{j}^{\prime\prime}(y) v^{o,n}(\tau,y) dy\label{eq:koexpression}
\end{eqnarray}
and $\lambda_{n}(y):=\left(1-\frac{y}{n}\right)h(y)$, which is assumed in $C^{2}([0,n])$
with $\lambda_{n}(y)v_{y}^{o,n}(\tau,y)=\lambda_{n}^{\prime}(y)v^{o,n}(\tau,y)=0$ for $y=0,n$ and $\tau>0$.
Then, $v_{n}^{h} \nearrow v^{h}$ and $\Theta_{n}^{h}(\tau) \nearrow v^{h}(\tau,\infty)$ as $n \longrightarrow \infty$.

This scheme has two advantages over that of \citet{SongYang2015}.
The first is that the initial-boundary datum is continuous at the corner $(\tau,y)=(0,n)$.
The other is $v^{o,n} \le v_{n}^{h} \le v^{*}$ for $h(y)=y$,
which follows by comparing the boundary values at $y=n$.
However, the numerical tests in Section \ref{sec:numericaltest}
reveal that this method is less accurate than that of \citet{EkstromLotstedtSydowTysk2011}.

\subsection{Stability analysis}
\subsubsection{Stability analysis of \citet{KaratzasRuf2013} and \citet{Cetin2018}}
\label{sec:cetinstability}
We consider a partial differential equation with general coefficients $s$ and $b$:
\begin{eqnarray}
u_{\tau} = \frac{1}{2}s^{2}u_{xx} + bs u_{x}, \; \tau > 0, x > 0.
\end{eqnarray}
Then, the finite difference $\theta$-scheme $\theta \in [0,1]$ around the point $(\tau,x)$ is expressed as
\begin{eqnarray}
\frac{u(\tau,x)-u(\tau^{\prime},x)}{\Delta \tau}
= 
\theta Lu(\tau,x) + (1-\theta)Lu(\tau^{\prime},x),
\end{eqnarray}
where $\tau^{\prime}=\tau-\Delta \tau$ and
\begin{eqnarray}
Lu(\tau,x)
&=&
\frac{s(x)^{2}}{2}
\frac{u(\tau,x+\Delta x)-2u(\tau,x)+u(\tau,x-\Delta x)}{(\Delta x)^{2}} \nonumber\\
&+& b(x)s(x)\frac{u(\tau,x+\Delta x)-u(\tau,x-\Delta x)}{2\Delta x}.
\end{eqnarray}
Arranging the terms leads to
\begin{eqnarray}
u(\tau,x)
&=& 
p u(\tau^{\prime},x) \nonumber\\
&+& q_{+}
(\theta u(\tau,x+ \Delta x)+ (1-\theta) u(\tau^{\prime},x + \Delta x)) \nonumber\\
&+& q_{-}
(\theta u(\tau,x- \Delta x)+ (1-\theta) u(\tau^{\prime},x - \Delta x)),
\end{eqnarray}
where
\begin{eqnarray}
p &=& \frac{1}{\frac{1}{\Delta \tau}+\theta \left(\frac{s(x)}{\Delta x}\right)^{2}}
\left(\frac{1}{\Delta \tau}-(1-\theta) \left(\frac{s(x)}{\Delta x}\right)^{2}\right), \\
q_{\pm} &=& \frac{1}{2}\frac{1}{\frac{1}{\Delta \tau}+\theta \left(\frac{s(x)}{\Delta x}\right)^{2}}
\left(\left(\frac{s(x)}{\Delta x}\right)^{2} \pm \frac{b(x)s(x)}{\Delta x}\right).
\end{eqnarray}
If $p,q_{\pm} \ge 0$, 
which is equivalent to 
$\frac{1}{\Delta \tau} \ge (1-\theta) \left(\frac{s(x)}{\Delta x}\right)^{2}$
and
\begin{eqnarray}
\left(\frac{s(x)}{\Delta x}\right)^{2} \ge \frac{|b(x)s(x)|}{\Delta x},\label{eq:peclet}
\end{eqnarray}
$u(\tau,x)$ is the average of $u(\tau^{\prime},x),u(\tau^{\prime},x\pm \Delta x),u(\tau,x\pm \Delta x)$ with appropriate weights,
and does not admit local maxima/minima.
The condition (\ref{eq:peclet}) is known as the Peclet condition.
If the convection term $b$ is large compared with the diffusion $s$,
the problem is known as convection dominated, where the finite grid must be fine.
See \citet{Zvan1997} and the remark following Theorem 10.1 of \citet{Duffy2006}.

When the method of \citet{Cetin2018} proposed in this study is applied to CEV models; that is, $s(x)=1,b(x)=-(\nu-1/2)/x$, the Peclet condition (\ref{eq:peclet}) is
\begin{eqnarray}
\frac{1}{\Delta x} \ge \frac{\left|2\nu -1\right|}{2x}
=  \frac{\left|2\nu -1\right|}{2j \Delta x},\label{eq:peclet_cev}
\end{eqnarray}
where $x = j \Delta x, j = 1,2,\dots$,
which is violated for $j < |\nu-1/2|$ if $\nu > 3/2$.

We consider the Peclet condition for the case $s(x)=1,b(x)=-cx^{-(1+p)}$ with $c,p>0$.
The solution $u$ with $u(0,x) = 1,u(\tau,0)=0$ is the complementary distribution function of the time-to-explosion of $X$:
\begin{eqnarray}
X_{t} = x + \beta_{t} - c\int_{0}^{t} \frac{du}{X_{u}^{1+p}}.
\end{eqnarray}
The Peclet condition $1/\Delta x \ge c/(j\Delta x)^{1+p}$
is violated for $j < (c/\Delta x^{p})^{1/(1+p)}$, which tends to $\infty$ as $\Delta x \downarrow 0$.

\subsubsection{Stability analysis of our scheme and that of \citet{EkstromLotstedtSydowTysk2011}}
\label{sec:ourstability}
We investigate the stability of our numerical scheme in the case of $h(y)=y$; that is, (\ref{eq:bce_ourboundary}),
because it is the standard finite difference method in the other cases.
We also consider the scheme of \citet{EkstromLotstedtSydowTysk2011}, to which similar arguments can be applied.

The Peclet condition (\ref{eq:peclet}) is satisfied owing to $b=0$.
The $\theta$-scheme of (\ref{eq:dbc}) is
\begin{eqnarray}
\left(\frac{\varepsilon_{n}}{\Delta \tau}+\frac{\theta}{\Delta y}\right) v(\tau,n)
= 
\left(\frac{\varepsilon_{n}}{\Delta \tau}-\frac{1-\theta}{\Delta y}\right) v(\tau^{\prime},n)
+
\frac{\theta}{\Delta y}
v(\tau,n-\Delta y)
+
\frac{1-\theta}{\Delta y}
v(\tau^{\prime},n-\Delta y),
\end{eqnarray}
which requires $\varepsilon_{n}\frac{\Delta y}{\Delta \tau} \ge 1-\theta$
for the numerical solution not to admit local maxima/minima.
Hereafter, we focus on the implicit scheme $\theta=1$ for simplicity.

Let $\{v_{l,m}\}_{l=0,1,\dots,L, m=0,1,\dots,M}$ for $\tau,n > 0$ and $L,M\in \mathbb{N}$ be the solution to
\begin{eqnarray}
\left\{ \begin{array}{ll}
\left(\frac{1}{\Delta \tau}+\left(\frac{\sigma_{m}}{\Delta y}\right)^{2}\right)v_{l,m} 
= \frac{1}{\Delta \tau}v_{l-1,m} 
+ \frac{1}{2}\left(\frac{\sigma_{m}}{\Delta y}\right)^{2}v_{l,m-1} 
+ \frac{1}{2}\left(\frac{\sigma_{m}}{\Delta y}\right)^{2}v_{l,m+1}, & l \neq 0, m \neq 0,M, \\
v_{l,0}=v_{l-1,0},\;
v_{l,M}=q_{n}v_{l-1,M}+(1-q_{n})v_{l,M-1}, & l=1,2,\dots,L, \\
v_{0,m}=h(m\Delta y),& m=0,1,\dots,M, \label{eq:system}
\end{array} \right.
\end{eqnarray}
where $\Delta \tau = \tau/L, \Delta y = n/M$, $\sigma_{m}=\sigma(m\Delta y)$,
and $q_{n} \in (0,1)$.
The configurations
$q_{n}=\frac{\varepsilon_{n}\frac{\Delta y}{\Delta \tau}}{1+\varepsilon_{n}\frac{\Delta y}{\Delta \tau}}$
and $q_{n}=0$
correspond to our procedure and that of \citet{EkstromLotstedtSydowTysk2011}, respectively.
Let $A$ be an $M+1$-dimensional square matrix such that 
the first two conditions in (\ref{eq:system}) are expressed as 
\begin{eqnarray}
A
\begin{pmatrix}v_{l+1,0}  \\ v_{l+1,1} \\ \vdots \\ v_{l+1,M} \end{pmatrix}
=
\begin{pmatrix}v_{l,0}  \\ v_{l,1} \\ \vdots \\ v_{l,M} \end{pmatrix}.
\end{eqnarray}
The matrix $A$ is invertible if $\Delta \tau$ is sufficiently small:
$P=(p_{i,j})_{i,j=0}^{M}:=A^{-1}$.

The following proposition shows that the solution to (\ref{eq:system}) satisfies certain desirable properties.
\begin{proposition}
\label{prop:discreteanalysis}
Assume that the matrix $A$ is invertible,
and let $v_{l,m}$ be a solution to (\ref{eq:system}) with $h(y)=y$.
Then, $v_{l,m}$ is non-increasing with respect to $l$
and non-decreasing and concave with respect to $m$.
\end{proposition}
\begin{proof}
Once we show that
$\{v_{1,m}\}_{m=0}^{M}$ is non-decreasing and concave
and $v_{l,m} \le v_{0,m}$ 
for a non-decreasing and concave function $h$ that is not identically $0$ with $h(0)=0$,
the conclusion follows by induction with respect to $l$,
because the existence of $A^{-1}$ ensures that $\{v_{1,m}\}_{m=0}^{M}$ is not identically $0$.

Suppose that $v_{1,M} \ge v_{0,M}$.
Let $m$ be the smallest number at which the maximum of $\{v_{1,m}\}_{m=0}^{M}$ is attained.
Then, $m \neq 0,M$
because the hypothesis yields
\begin{eqnarray}
0 < v_{0,M} \le v_{1,M} = q_{n}v_{0,M}+(1-q_{n})v_{1,M-1} \le v_{1,M-1}.
\end{eqnarray}
In addition, we obtain $v_{1,m} \ge v_{1,m+1},v_{1,m} > v_{1,m-1}$ 
and $v_{1,m} \ge v_{1,M} \ge v_{0,M} \ge v_{0,m}$,
which contradicts the fact that $v_{1,m}$ is an average of $v_{1,m\pm 1},v_{0,m}$.

Subsequently, suppose that the maximum of $\{\Delta_{1,m}\}_{m=0}^{M}$ is strictly positive, where $\Delta_{l,m}:=v_{l,m}-v_{0,m}$ for $l=0,1$.
Then, the maximum is attained at some $m \neq 0,M$.
Based on the concavity of $h$, we obtain
\begin{eqnarray}
\left(\frac{1}{\Delta \tau}+\left(\frac{\sigma_{m}}{\Delta y}\right)^{2}\right)\Delta_{1,m} 
&\le& \frac{1}{\Delta \tau}\Delta_{0,m} 
+ \frac{1}{2}\left(\frac{\sigma_{m}}{\Delta y}\right)^{2}\Delta_{1,m-1} 
+ \frac{1}{2}\left(\frac{\sigma_{m}}{\Delta y}\right)^{2}\Delta_{1,m+1} \nonumber\\
&=&
\frac{1}{2}\left(\frac{\sigma_{m}}{\Delta y}\right)^{2}\Delta_{1,m-1} 
+ \frac{1}{2}\left(\frac{\sigma_{m}}{\Delta y}\right)^{2}\Delta_{1,m+1} \nonumber\\
&\le& \frac{1}{2}\left(\frac{\sigma_{m}}{\Delta y}\right)^{2} \max \{\Delta_{1,m+1},\Delta_{1,m-1} \},
\end{eqnarray}
which is a contradiction.
We obtain $\Delta_{1,m} \le 0$, which yields the concavity of $v_{1,m}$
and 
\begin{eqnarray}
v_{1,m}-v_{1,m-1} \ge v_{1,M}-v_{1,M-1} = \frac{q_{n}}{1-q_{n}}(v_{0,M}-v_{1,M}) \ge 0.
\end{eqnarray}
\end{proof}

Proposition \ref{prop:mp} is a discrete maximum principle,
for which the assumptions are satisfied by the solution to (\ref{eq:system}):
\begin{eqnarray}
v_{l,M}-v_{l,M-1} = q_{n}(v_{l-1,M}-v_{l,M-1})
\ge q_{n}(v_{l,M}-v_{l,M-1}).
\end{eqnarray}
Then, $P$ is a transition probability matrix; that is, $p_{i,j} \ge 0$ and $\sum_{j=0}^{M}p_{i,j}=1$ for each $i=0,\dots,M$.
From Proposition \ref{prop:discreteanalysis},
the Markov process determined by $P$ is a supermartingale.
\begin{proposition}
\label{prop:mp}
Let $v_{l,m}$ be a solution to
\begin{eqnarray}
\left\{ \begin{array}{ll}
\left(\frac{1}{\Delta \tau}+\left(\frac{\sigma_{m}}{\Delta y}\right)^{2}\right)v_{l,m} 
\ge \frac{1}{\Delta \tau}v_{l-1,m} 
+ \frac{1}{2}\left(\frac{\sigma_{m}}{\Delta y}\right)^{2}v_{l,m-1} 
+ \frac{1}{2}\left(\frac{\sigma_{m}}{\Delta y}\right)^{2}v_{l,m+1}, & l \neq 0, m \neq 0,M, \\
v_{l,0} \ge 0,\;
v_{l,M} \ge v_{l,M-1}, & l=1,2,\dots,L, \\
v_{0,m} \ge 0,& m=0,1,\dots,M.
\end{array} \right.
\end{eqnarray}
Then, $v_{l,m} \ge 0$ for $l=0,1,2,\dots,L, m=0,1,\dots,M$.
\end{proposition}
\begin{proof}
Suppose that $v_{l,m} \ge 0$ for $m=0,1,\dots,M$ holds for some $l>0$,
which is the case for $l=0$ based on the initial condition.
If the minimum of $\{v_{l+1,m}\}_{m=0}^{M}$ is negative and attained at $m$,
$m \neq 0,M$
because of $v_{l+1,0} \ge 0$ and $v_{l+1,M} \ge v_{l+1,M-1}$.
However, $v_{l+1,m}$ is larger than the average of $v_{l,m},v_{l+1,m\pm 1}$,
which contradicts $v_{l,m} \ge 0 > v_{l+1,m}$ and $v_{l+1,m\pm 1} \ge v_{l+1,m}$.
\end{proof}

Finally, we compare our numerical solution with that of \citet{EkstromLotstedtSydowTysk2011}.
\begin{proposition}
\label{prop:neumann}
Let $v_{l,m}$ be the solution to (\ref{eq:system}) with $q_{n}=0$
and $v_{l,m}^{*}$ be that with $q_{n}=\frac{\varepsilon_{n}\frac{\Delta y}{\Delta \tau}}{1+\varepsilon_{n}\frac{\Delta y}{\Delta \tau}}$.
Then, we obtain $v_{l,m}^{*} \ge v_{l,m}$.
\end{proposition}
\begin{proof}
Proposition \ref{prop:mp} is applied to $\Delta_{l,m}:=v_{l,m}^{*}-v_{l,m}$.
\end{proof}

We study the derivatives of the solutions at $(\tau,y)=(0,n)$
with respect to $\tau$ as $\Delta \tau, \Delta y \longrightarrow 0$ in the case of $h(y)=y$.
This quantity in our procedure is bounded from below:
\begin{eqnarray}
\frac{v_{1,M}^{*}-v_{0,M}^{*}}{\Delta \tau}
= -\frac{1}{\varepsilon_{n}} \frac{v_{1,M}^{*}-v_{1,M-1}^{*}}{\Delta y}
\ge -\frac{1}{\varepsilon_{n}} \frac{v_{1,1}^{*}-v_{1,0}^{*}}{\Delta y}
\ge -\frac{1}{\varepsilon_{n}},\label{eq:slope}
\end{eqnarray}
which follows from
\begin{eqnarray}
\left(\frac{1}{\Delta \tau}+\left(\frac{\sigma_{1}}{\Delta y}\right)^{2}\right) (v_{1,1}^{*}-v_{1,0}^{*})
&=&
\frac{1}{\Delta \tau}v_{0,1}^{*}+\frac{1}{2}\left(\frac{\sigma_{1}}{\Delta y}\right)^{2}v_{1,2}^{*}
- \left(\frac{1}{\Delta \tau}+\frac{1}{2}\left(\frac{\sigma_{1}}{\Delta y}\right)^{2}\right)v_{1,0}^{*}
\nonumber\\
&\le& 
\frac{1}{\Delta \tau}\Delta y+\frac{1}{2}\left(\frac{\sigma_{1}}{\Delta y}\right)^{2}2 \Delta y,
\end{eqnarray}
and from the concavity of $v_{l,m}^{*}$ with respect to $m$.
For the solution of \citet{EkstromLotstedtSydowTysk2011},
\begin{eqnarray}
\frac{v_{1,M}-v_{0,M}}{\Delta \tau}
=\frac{v_{1,M-1}-v_{0,M}}{\Delta \tau}
\le \frac{v_{0,M-1}-v_{0,M}}{\Delta \tau}
= -\frac{\Delta y}{\Delta \tau}
\end{eqnarray}
can explode to $-\infty$.

\subsection{Numerical tests}
\label{sec:numericaltest}
We demonstrate that our procedure outperforms the others through numerical tests.
We use CEV models (\ref{eq:cev}) with two configurations, $(a,\nu)=(0.5,1)$ and $(0.5,1/3)$,
and compute the prices of two derivatives: one is a forward contract with maturity $T=1$ and the other is a derivative that pays $y^{p}$, $p=1+0.9/(2\nu)$ at time $T$.
In each test, the prices are computed using the five aforementioned methods as well as an analytical one, with the underlying prices varied.
We apply implicit finite difference methods, namely $\theta = 1$, on the finite grid $(0,T) \times (0,n)$ with $n = 40$ for the numerical procedures, except for that of \citet{Cetin2018}, as follows:
For each boundary condition on $y=n$, the numerical solutions to the Black--Scholes equation (\ref{eq:bse}) are computed on a uniform grid with $\Delta \tau = 0.01$ and $\Delta y = 0.05$.
For \citet{Cetin2018}, we compute $u$, which is the solution to (\ref{eq:cetin2}) with 
\begin{eqnarray}
T_{1/f}(x) = \frac{2\nu-1}{x}
\end{eqnarray}
on the finite grid $(0,T) \times (0,8)$ with $\Delta \tau=0.01$ and $\Delta x = 0.01$,
and obtain the prices using $v^{h}(\tau,f(x))=f(x)u(\tau,x)$ with $f(x)=(2\nu/a)^{2\nu}x^{-2\nu}$.

The results are reported in Tables \ref{table:forward1} and \ref{table:power1} for $\nu = 1$, and in Tables \ref{table:forward1_3} and \ref{table:power1_3} for $\nu=1/3$.
The numerical tests with $\nu=1$ demonstrate that the method of \citet{Cetin2018} and ours are the most accurate.
More precisely, the former is slightly better than ours.
However, the tests with $\nu=1/3$ show that 
the methods of \citet{Cetin2018} and \citet{SongYang2015} are inferior to the other three methods.

The spot volatility $\sigma(y)/y$ changes its behaviors according to $\nu \gtrless 1/2$, which corresponds to the sign of $T_{1/f}$.
Although it is increasing in both cases, it is concave in the case $\nu>1/2$ and convex in the case $\nu<1/2$.
In the latter case, the volatility increases rapidly with the underlying price, which results in almost constant prices of the derivatives once the underlying price exceeds a certain level.
The behaviors of the probability $P_{\cdot}^{1/f}[\tau < T_{0}]$ in the neighborhood of $0$ are also different.
The right-derivative at $0$ is divergent in the latter case.
This possibly makes it difficult for the method of \citet{Cetin2018} to compute the prices in the neighborhood of $0$.
The results are presented in Table \ref{table:prob}.

As mentioned previously, the Peclet condition (\ref{eq:peclet_cev}) for the method of \citet{Cetin2018} is violated in the case $\nu > 3/2$.
However, according to our tests, the accuracy and stability are similar to those of the case $\nu=1$.
The deterioration in accuracy in the method of \citet{Cetin2018} for $\nu = 1/3$ can be explained by
verifying whether $v^{*}(\tau,f(x)) = f(x) u^{*}(\tau,x)$ is decreasing with respect to $x$; that is,
the derivative $v^{*}(\tau,f(x))$ must be negative:
\begin{eqnarray}
f^{\prime}(x)u^{*}(\tau,x) + f(x) u_{x}^{*}(\tau,x)
= \left( \frac{-2\nu}{x}u^{*}(\tau,x) + u_{x}^{*}(\tau,x) \right)f(x)
< 0.
\end{eqnarray}
By discretizing the derivative $u_{x}^{*}$ and letting $x=\Delta x$, 
we obtain
\begin{eqnarray}
2\nu \frac{u^{*}(\tau,\Delta x)}{\Delta x} > u_{x}^{*}(\tau,\Delta x) \sim \frac{u^{*}(\tau,2\Delta x)}{2\Delta x}
\end{eqnarray}
and
\begin{eqnarray}
(4\nu-1) \frac{u^{*}(\tau,\Delta x)}{\Delta x}
> \frac{u^{*}(\tau,2\Delta x)-u^{*}(\tau,\Delta x)}{\Delta x}
\end{eqnarray}
when arranging the terms.
The ratios of the left- and right-hand sides approximate the left- and right-derivative of $u_{x}^{*}(\tau,\cdot)$ at $x = \Delta x$, respectively.
This condition fails if $\nu \le 1/4$ owing to the monotonicity of $u^{*}(\tau,\cdot)$,
and is likely to be violated if $4\nu - 1$ is positive but small.

\clearpage
\begin{table}[htbp]
\begin{center}
\caption{Price with $\nu = 1$ and payoff $h(y)=y$}
\label{table:forward1}
\begin{tabular}{l|rrrrrrrrrrrr} 
$y$  &  1.0 &  1.5 &  2.0 &  2.5 &  3.0 &  3.5 &  4.0 &  4.5 &  5.0 \\\hline
\citet{EkstromLotstedtSydowTysk2011}  &  1.00 &  1.49 &  1.94 &  2.34 &  2.70 &  3.01 &  3.28 &  3.51 &  3.72 \\
\citet{SongYang2015}  &  1.00 &  1.47 &  1.90 &  2.25 &  2.54 &  2.77 &  2.95 &  3.09 &  3.20 \\
\citet{Cetin2018}  &  1.00 &  1.49 &  1.96 &  2.40 &  2.79 &  3.15 &  3.47 &  3.75 &  4.01 \\
\citet{tsuzuki2023pitmans}  &  1.00 &  1.48 &  1.93 &  2.32 &  2.65 &  2.94 &  3.18 &  3.39 &  3.56 \\
This study  &  1.00 &  1.49 &  1.96 &  2.39 &  2.77 &  3.12 &  3.44 &  3.72 &  3.97 \\\hline
Exact  &  1.00 &  1.49 &  1.96 &  2.40 &  2.79 &  3.14 &  3.46 &  3.74 &  3.99 \\
\end{tabular}
\end{center}

\begin{center}
\caption{Price with $\nu = 1$ and payoff $h(y)=y^{1+0.9/(2\nu)}$}
\label{table:power1}
\begin{tabular}{l|rrrrrrrrrrrr} 
$y$  &  1.0 &  1.5 &  2.0 &  2.5 &  3.0 &  3.5 &  4.0 &  4.5 &  5.0 \\\hline
\citet{EkstromLotstedtSydowTysk2011}  &  1.11 &  2.07 &  3.16 &  4.24 &  5.25 &  6.18 &  7.02 &  7.76 &  8.43 \\
\citet{SongYang2015}  &  1.10 &  2.03 &  3.00 &  3.89 &  4.66 &  5.30 &  5.83 &  6.26 &  6.61 \\
\citet{Cetin2018}  &  1.11 &  2.15 &  3.43 &  4.83 &  6.27 &  7.67 &  9.00 & 10.24 & 11.40 \\
\citet{tsuzuki2023pitmans}  &  1.10 &  2.06 &  3.10 &  4.10 &  5.02 &  5.83 &  6.54 &  7.16 &  7.69 \\
This study  &  1.11 &  2.14 &  3.40 &  4.78 &  6.18 &  7.54 &  8.84 & 10.07 & 11.21 \\\hline
Exact  &  1.11 &  2.15 &  3.44 &  4.85 &  6.28 &  7.67 &  8.99 & 10.22 & 11.36 \\
\end{tabular}
\end{center}

\begin{center}
\caption{Price with $\nu = 1/3$ and payoff $h(y)=y$}
\label{table:forward1_3}
\begin{tabular}{l|rrrrrrrrrrrr} 
$y$  &  1.0 &  1.5 &  2.0 &  2.5 &  3.0 &  3.5 &  4.0 &  4.5 &  5.0 \\\hline
\citet{EkstromLotstedtSydowTysk2011}  &  0.89 &  1.01 &  1.05 &  1.07 &  1.07 &  1.08 &  1.08 &  1.08 &  1.08 \\
\citet{SongYang2015}  &  0.88 &  1.00 &  1.02 &  1.02 &  1.01 &  1.00 &  0.99 &  0.98 &  0.96 \\
\citet{Cetin2018}  &  0.89 &  1.01 &  1.04 &  1.05 &  1.05 &  1.05 &  1.05 &  1.05 &  1.04 \\
\citet{tsuzuki2023pitmans}  &  0.89 &  1.01 &  1.05 &  1.07 &  1.07 &  1.07 &  1.08 &  1.08 &  1.08 \\
This study  &  0.89 &  1.01 &  1.05 &  1.07 &  1.07 &  1.08 &  1.08 &  1.08 &  1.08 \\\hline
Exact  &  0.89 &  1.01 &  1.05 &  1.06 &  1.07 &  1.07 &  1.07 &  1.07 &  1.07 \\
\end{tabular}
\end{center}

\begin{center}
\caption{Price with $\nu = 1/3$ and payoff $h(y)=y^{1+0.9/(2\nu)}$}
\label{table:power1_3}
\begin{tabular}{l|rrrrrrrrrrrr} 
$y$  &  1.0 &  1.5 &  2.0 &  2.5 &  3.0 &  3.5 &  4.0 &  4.5 &  5.0 \\\hline
\citet{EkstromLotstedtSydowTysk2011}  &  1.02 &  1.43 &  1.56 &  1.61 &  1.64 &  1.65 &  1.65 &  1.66 &  1.66 \\
\citet{SongYang2015}  &  1.00 &  1.37 &  1.47 &  1.50 &  1.50 &  1.49 &  1.47 &  1.45 &  1.44 \\
\citet{Cetin2018}  &  1.02 &  1.41 &  1.53 &  1.58 &  1.59 &  1.60 &  1.60 &  1.59 &  1.59 \\
\citet{tsuzuki2023pitmans}  &  1.02 &  1.42 &  1.55 &  1.60 &  1.62 &  1.63 &  1.64 &  1.64 &  1.64 \\
This study  &  1.02 &  1.43 &  1.56 &  1.62 &  1.64 &  1.65 &  1.66 &  1.66 &  1.67 \\\hline
Exact  &  1.02 &  1.42 &  1.55 &  1.60 &  1.62 &  1.63 &  1.64 &  1.64 &  1.64 \\
\end{tabular}
\end{center}

\begin{center}
\caption{Probability $P_{\cdot}^{1/f}[\tau < T_{0}]$ computed using $v^{*}(\tau,y)/y$ with $\nu = 1/3$}
\label{table:prob}
\begin{tabular}{l|rrrrrrrrr} 
$x$ & 0.01& 0.10& 0.20& 0.30& 0.40& 0.50& 0.60& 0.70& 0.80\\
$y$ & 26.100 & 5.6229 & 3.5422 & 2.7032 & 2.2314 & 1.9230 & 1.7029 & 1.5366 & 1.4057 \\\hline
\citet{EkstromLotstedtSydowTysk2011} & 0.0414 & 0.1921 & 0.3038 & 0.3956 & 0.4751 & 0.5452 & 0.6075 & 0.6629 & 0.7122 \\
\citet{SongYang2015} & 0.0147 & 0.1682 & 0.2822 & 0.3760 & 0.4573 & 0.5291 & 0.5931 & 0.6501 & 0.7008 \\
\citet{Cetin2018} & 0.0337 & 0.1847 & 0.2970 & 0.3894 & 0.4695 & 0.5402 & 0.6030 & 0.6589 & 0.7086 \\
\citet{tsuzuki2023pitmans} & 0.0410 & 0.1917 & 0.3034 & 0.3952 & 0.4747 & 0.5448 & 0.6072 & 0.6626 & 0.7119 \\
This study & 0.0414 & 0.1921 & 0.3038 & 0.3956 & 0.4751 & 0.5452 & 0.6075 & 0.6629 & 0.7122 \\\hline
Exact & 0.0413 & 0.1913 & 0.3025 & 0.3939 & 0.4731 & 0.5430 & 0.6052 & 0.6606 & 0.7099 \\
\end{tabular}
\end{center}
\end{table}

\clearpage

\end{document}